\theoremstyle{plain}
 \newtheorem{thm}{Theorem}[section]
 \newtheorem{prop}[thm]{Proposition}
 \newtheorem{lem}[thm]{Lemma}
 \newtheorem{cor}[thm]{Corollary}
\theoremstyle{definition}
 \newtheorem{exm}{Example}[section]
 \newtheorem{dfn}{Definition}[section]
 \numberwithin{equation}{section}
\title{Diffeomorphisms of Scalar Quantum Fields via Generating Functions}
\author{Ali Assem Mahmoud\;\;}
\address{Ali Assem Mahmoud: Department of Mathematics, Faculty of Science, Cairo University, Egypt; and Department of Combinatorics and Optimization, University of Waterloo, ON, Canada}
\email{ali.mahmoud@uwaterloo.ca}
\author{\;\;Karen Yeats}
\address{Karen Yeats: Department of Combinatorics and Optimization, University of Waterloo, ON, Canada}
\email{kayeats@uwaterloo.ca}
\thanks{KY is supported by an NSERC Discovery grant, and was supported by the Humboldt Foundation as a Humboldt fellow during the development of this work.  KY would like to thank Dirk Kreimer and Humboldt University for hosting her visit to Berlin as a Humboldt Fellow.  Both authors would like to thank Dirk Kreimer and Paul Balduf for many relevant discussions.  Thanks also to the referee for their comments and insights.}
\begin{document}

\maketitle
\begin{abstract}
We study the application of formal diffeomorphisms to scalar fields. We give a new proof that interacting tree amplitudes vanish in the resulting theories.  Our proof is directly at the diagrammatic level, not appealing to the path integral, and proceeds via a generating function analysis so is more insightful than previous proofs.  Along the way we give new combinatorial proofs of some Bell polynomial identities, and we comment on the connection with the combinatorial Legendre transform.
\end{abstract}

\section{Introduction}
%======================================================================
A free scalar quantum field is usually defined via a Lagrangian density 
\[
L(\phi) = \frac{1}{2}\partial_\mu \phi(x)\partial^\mu \phi(x) - \frac{m^2}{2}\phi^2(x)
\]
that contains no self-interaction terms, where $m$ is the mass of the $\phi$-particle. A \textit{field diffeomorphism} $F$ is going to be formally defined as a power series in the field
\[\phi\mapsto F(\phi)=a_0\phi(x)+a_1\phi(x)^2+\cdots \;= \sum_{j=0}^\infty a_j \phi^{j+1},\]
where $a_0=1$, i.e. $F$ is a formal diffeomorphism tangent to the identity. The problem is then to study the field theory expressed by the transformed Lagrangian, if one applies the diffeomorphism to the Lagrangian equation above. The result is seemingly an interacting theory. 

Similarly, one can take an interacting scalar field theory and apply a field diffeomorphism, again resulting in many new interaction terms in the Lagrangian.  In both cases one would expect the contributions of these new terms should cancel.  Likewise, as we will be treating $F(\phi)$ as a formal power series, $\phi$ could in fact be a vector field rather than a scalar field with the $a_n\phi^n$ interpreted as symmetric tensors.  All we really need is that $F(\phi)$ behaves as a formal power series in the indeterminate $\phi$.

In classical field theory a field diffeomorphism is merely a canonical transformation that does not change the Poisson brackets \cite{paulrefers}, and it simply relates theories with different Lagrangians. However, for quantum fields, there are some ambiguities, probably due to operator ordering in the path-integral formulation, and the topic is therefore controversial \cite{diff1,diff2,diff3,diff4,diff5,diff6,diff7}.  Additionally, the order by order cancellations take a quite complicated form; one cannot find small sets of diagrams which cancel, but can only see it in the final sum.

In fact the expected cancellations do hold, a result which we give a new, direct proof of, bypassing issues with the path integral and cleaning up a previous intricate and uninsightful proof of one of us with Dirk Kreimer.

\section{Motivation and Prior Work}

The approach followed in \cite{kreimer2} and \cite{karendiffeo} is a `least-action' approach:  they study field diffeomorphisms order by order in perturbation theory. In \cite{kreimer2} D. Kreimer and A. Velenich showed by direct calculations that, up to six external legs, interacting tree-level amplitudes do vanish. Yet, it was not still known how this can be generalized to higher orders. The vanishing of tree-level amplitudes is crucial as it leads to the vanishing of loop amplitudes via Cutkosky rules and the optical theorem. In \cite{karendiffeo} one of us with Dirk Kreimer proved that if a point field diffeomorphism $\phi(x) \mapsto \sum_{j\geq 0} a_j \phi^{j+1}$ is applied to a free scalar field theory, the resulting field theory, while it appears to have many interaction terms, in fact remains a free theory by appropriate cancellations between diagrams. 

At tree level these cancellations hold whenever the external edges are on-shell, while at loop level they additionally require renormalization with a kinematical renormalization scheme.  This work followed up on the observations by Kreimer and Velenich \cite{kreimer2}.

The arguments of \cite{karendiffeo} proceeded first to reduce the tree level problem to a purely combinatorial problem of proving certain combinatorial identities. These were then proved using Bell polynomials.  Then the loop level results were bootstrapped off the tree level results using Cutkosky rules and the optical theorem.  For both the tree level and loop level results, the key thing to consider was the tree level amplitudes with exactly one external edge which was potentially off-shell.  Calculating these one off-shell edge amplitudes is what reduces to a purely combinatorial problem, and what, with the optical theorem, glues up into the loop level results.

%%%%%%%%5%%%%%
However, the proofs of \cite{karendiffeo}, even at tree level, were unsatisfying as they were both opaque and intricate, consisting of delicate Bell polynomial manipulations which needed to reach fairly deeply into the repertoire of known Bell polynomial identities without obtaining insight.  The authors in \cite{karendiffeo} conjectured that a proof on the level of generating functions could be possible, and could give better insight, especially given the fact that Bell polynomials come from series composition. 

\textbf{Our Contribution:}
This is what we do in this paper, reproving the tree level cancellations of \cite{karendiffeo} at the level of generating functions, and then leveraging the extra insight gained to see exactly how the solution appears as a compositional inverse, and making an explicit connection with the combinatorial Legendre transform of Jackson, Kempf, and Morales.  The latter is particularly interesting because of the role of the on-shell condition in the outcome of the combinatorial Legendre transform in our situation.  
The tree-level amplitudes with at most one off-shell edge remain key for us, since we use the same reduction to combinatorics as \cite{karendiffeo}.  
We show that the series which is the exponential generating function of these tree-level amplitudes with at most one off-shell edge for the transformed theory is exactly the compositional inverse of the diffeomorphism $F$ that was originally applied. Additionally, along the way we give new combinatorial proofs of some Bell polynomial identities due to Cvijovi\'c (see \cite{cvijovic8}).

Note that ultimately every problem considered in this paper is purely combinatorial.  

In terms of more physical considerations, note that no appeal to the path integral or its measure is used in \cite{karendiffeo} nor here.  All results are proven by rigorous arguments at the diagram level.  Consequently these results are ground truth, and the correct transformations for the path integral and path integral measure can be reverse engineered from them.  That a field diffeomorphism ought to pass nicely though the path integral is often viewed as a near triviality, though others have argued that in fact it does not (see \cite{diffeonot}). Different lines of thought can also be seen in \cite{diff1,diff2,diff3,diff4,diff5,diff6,diff7}.  Settling this rigorously while side stepping the path integral entirely was a major motivation for \cite{karendiffeo} as well as for us here.  Furthermore, even from a physical perspective where this result is clear, our approach makes explicit exactly how the non-trivial cancellations work in order to give the diffeomorphism invariance.

Paul Balduf independently arrived at the fact that the series from the tree-level amplitudes is the compositional inverse of the diffeomorphism through analyzing the $S$-matrix \cite{Paul, Paulthesis}. Both he and us first obtained this fact in the fall of 2018, and we discussed our different proofs at that time, while the present authors were visiting Berlin.  One outcome of these discussions is that our method was used in the proof of Theorem 3.3 in \cite{Paul}.

\section{Field Theory Set-up}\label{sec field theory set up}
Let $F$ be a field diffeomorphism 
\[
F:\phi\mapsto F(\phi)=\sum_{j=0}^\infty a_j\phi^{j+1}
\]
with $a_0=1$. When $F$ is applied to a free field $\phi(x)$ with Lagrangian density 
\[
L(\phi)=\frac{1}{2}\partial_\mu\phi(x)\partial^\mu\phi(x)-\frac{m^2}{2}\phi^2(x),
\]
it gives the new Lagrangian 
\[
L_F(\phi)=\frac{1}{2}\partial_\mu F(\phi)\partial^\mu F(\phi)-\frac{m^2}{2}F(\phi)F(\phi),
\] 
where the field $\phi$ is a scalar field from the 4-dimensional Minkowski space-time ($\phi:\mathbb{R}^4\rightarrow\mathbb{R}$).
Expanding out the transformed Lagrangian we obtain
\begin{align*}
L_F(\phi)&=\frac{1}{2}\partial_\mu F(\phi)\partial^\mu F(\phi)-\frac{m^2}{2}F(\phi)F(\phi) \\
    & = \frac{1}{2}\partial_\mu (\phi+a_1\phi^2+a_2\phi^3+\cdots)\partial^\mu (\phi+a_1\phi^2+a_2\phi^3+\cdots)-\frac{m^2}{2}(\phi+a_1\phi^2+a_2\phi^3+\cdots)^2 \\
    & = \frac{1}{2}\partial_\mu \phi\partial^\mu \phi+ a_1\frac{1}{2}\partial_\mu \phi^2\partial^\mu \phi+a_1\frac{1}{2}\partial_\mu \phi\partial^\mu\phi^2+a_1^2\frac{1}{2}\partial_\mu \phi^2\partial^\mu \phi^2 \\
    & \quad + a_2\frac{1}{2}\partial_\mu \phi^3\partial^\mu \phi+a_2\frac{1}{2}\partial_\mu \phi\partial^\mu\phi^3+\cdots -\frac{m^2}{2}\phi^2 - (2a_1)\frac{m^2}{2}\phi^3 - (a_1^2+2a_2)\frac{m^2}{2}\phi^4 - \cdots \\
    & = \frac{1}{2}\partial_\mu \phi\partial^\mu \phi+ (4a_1)\frac{1}{2}\phi \partial_\mu \phi\partial^\mu \phi+(4a_1^2+6a_2)\frac{1}{2}\phi^2\partial_\mu \phi\partial^\mu\phi + \cdots  \\
    & \quad -\frac{m^2}{2}\phi^2 - (2a_1)\frac{m^2}{2}\phi^3 - (a_1^2+2a_2)\frac{m^2}{2}\phi^4 - \cdots \\
    & = \frac{1}{2}\partial_\mu \phi\partial^\mu \phi+ \frac{1}{2}\partial_\mu \phi\partial^\mu \phi\sum_{n=1}^\infty\frac{d_n}{n!}\phi^n - \frac{m^2}{2}\phi^2 - \frac{m^2}{2}\sum_{n=1}^{\infty}\frac{c_n}{(n+2)!}\phi^{n+2}. \end{align*}
where $d_n = n!\sum_{j=0}^n (j+1)(n-j+1)a_ja_{n-j}$ and $c_n = (n+2)!\sum_{j=0}^n a_ja_{n-j}$, (see equation 15 of \cite{kreimer2} for a formulation with slightly different conventions). 

We can see that from each term of the original free Lagrangian we obtain a vertex of each order $\geq 3$, (thus we have two types of vertices of each order) which we will call the kinematic and massive vertices respectively.  We read off the Feynman rules to be 

\begin{itemize}
    \item 

\[
    i\frac{d_{n-2}}{2}(p_1^2+p_2^2+\cdots + p_n^2)
\]
for the $n$-point kinematic vertex where $p_1, \ldots, p_n$ are the momenta of the incident edges; and 

\item 
\[
    -i\frac{m^2}{2}c_{n-2}
\]
for the $n$-point massive vertex.  

\item The free part of the Lagrangian is unchanged so the propagator remains
\[
\frac{i}{p^2-m^2}
\]
for momentum $p$.  We are interested in the on-shell $n$-point tree level amplitude.
\end{itemize}

For the combinatorial reader let us spell out in a bit more detail how the above leads to a purely combinatorial problem on trees.  We are working with graphs with external edges.  For a graph theorist such graphs can be constructed as bipartite graphs where if the bipartition is $(A,B)$ then we require that all vertices in $B$ are either of degree $1$ or $2$ and all vertices in $A$ are of degree $\geq 3$.  Then $B$ in fact contains no additional information: the 2-valent vertices of $B$ just mark the \emph{internal edges} of the original graph, while the 1-valent vertices of $B$ mark some bare half-edges of the original graph, known as \emph{external edges} or \emph{legs}.

To calculate the $n$-point tree level amplitude, we must sum over all trees (connected acyclic graphs of the type above) with $n$ external edges and with vertices either kinematic or massive.  For each tree we compute as follows. To each internal and external edge of the tree assign a momentum $p$ in Minkowski space, that is in $\mathbb{R}^4$ but using the pseudo-metric $|(a_0, a_1, a_2, a_3)|^2 = -a_1^2 + a_2^2+a_3^2+a_4^2$ (in fact the choice of signature will not matter).  Following the usual convention we will write $p^2$ for $|p|^2$.  Impose momentum conservation at each vertex, that is the sum of the momenta for the edges incident to any given vertex must be $0$.  Now multiply the factors given by the Feynman rules for each vertex and the propagator for each internal edge to get the contribution of this tree.

The on-shell condition applies only to the external edges and this condition is that $p^2=m^2$ for each external momentum $p$.  Because we are working with a pseudo-metric, note that $p^2=0$ does not imply $p=0$.

In summary, combinatorially we have the following:
\begin{enumerate}
    \item The $n$-point tree level amplitude is the sum over all trees with $n$ external edges. 
    \item Each external edge is labelled, nothing else is.
    \item A momentum variable $p$ is assigned to every internal and external edge.
    \item The on-shell condition is that $p^2=m^2$ holds for the momentum of every external edge, where ($p^2=p\cdot p$).
    \item Conservation of momenta holds at every vertex.
    \item
    The vertices come in two kinds, massive and kinematic, each with its own contribution to the sum given by the Feynman rules. 
    \item The Feynman rule for a kinematic vertex of degree $n$ with momenta $p_1,\ldots,p_n$ for the incident legs is
    \[i\frac{d_{n-2}}{2}(p_1^2+p_2^2+\cdots+p_n^2),\]where
    \[d_r=r!\sum_{j=0}^r(j+1)(r-j+1)a_ja_{r-j}. \]
    \item The Feynman rule for a massive vertex of degree $n$ is
    \[-i\frac{m^2}{2}n!\sum_{j=0}^{n-2}a_ja_{n-2-j}.\]
    
    \item  The Feynman rule for an internal edge is \[\frac{i}{p^2-m^2},\] where $p$ is the momentum assigned to this propagator.
    
    \item  The contribution of each tree is the product of the Feynman rules for its vertices and internal edges (no contribution from external edges).
\end{enumerate}

Now notice that since we are summing over all such trees, we get the same value if we consider only a single type of combined vertices each of which is the sum of kinematic and massive vertices of degree $n$, for each $n$.

\subsection{Tree-level Amplitudes}
The best way to explore the problem combinatorially is through a small example. Fix an internal edge $e$, and consider all the possible subtrees that may occur below $e$ for a fixed number of legs, $n$. Let $b_n$ be the sum over all such subtrees with the Feynman rules applied to the vertices and edges of the subtree along with the edge $e$ itself.

\begin{exm}
When $n=3$ we have the contributions from the tree graphs in Figure \ref{diffeotr} below.

\begin{figure}[h!]
    \centering
    \includegraphics[scale=0.75]{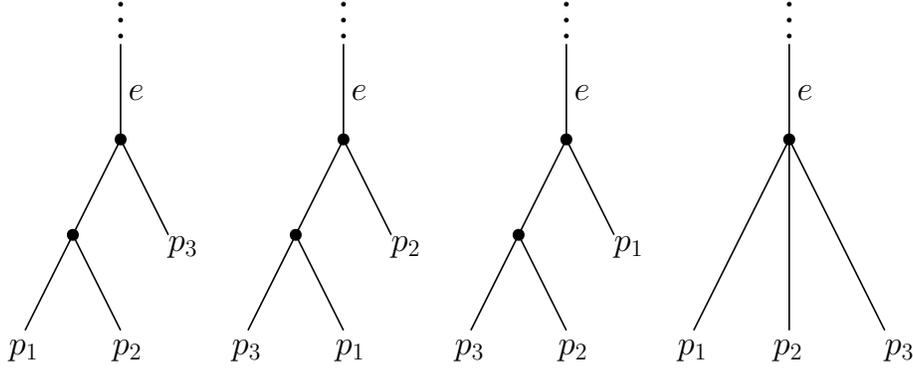}
    \caption{Subtrees below $e$ corresponding to $n=3$ external legs.}
    \label{diffeotr}
\end{figure}

 where the vertical dots above $e$ indicate the rest of the tree.
 Let us recursively compute $b_3$, knowing that $b_2=-2a_1$ (and $b_1=1$) which is easily verified. We shall denote the sum $p_1+p_2+p_3$ by $\mathbf{p}$.
 
 $$b_3=\frac{i^2(\frac{d_2}{2}(p_1^2+p_2^2+p_3^2+\mathbf{p}^2)+c_2)}{\mathbf{p}^2-m^2}+\frac{i^2(\frac{d_1}{2}(p_1^2+(p_2+p_3)^2+\mathbf{p}^2)+c_1)b_2}{\mathbf{p}^2-m^2}+$$
 
 $$+\frac{i^2(\frac{d_1}{2}(p_2^2+(p_1+p_3)^2+\mathbf{p}^2)+c_1)b_2}{\mathbf{p}^2-m^2}+\frac{i^2(\frac{d_1}{2}(p_3^2+(p_1+p_2)^2+\mathbf{p}^2)+c_1)b_2}{\mathbf{p}^2-m^2}.$$
 
 This simplifies to $b_3=-6a_2+12a_1^2$ as the reader may check. It is these cancellations that required an interpretation and triggered the research in \cite{karendiffeo,kreimer2}, especially because of the independence from momenta and masses in the resulting values.

\end{exm}

Returning to the general case, we are interested in the quantity $b_n$, which is the result of fixing an internal edge $e$ and summing over all possible subtrees with $n$ external edges labelled $p_1, p_2, \ldots,p_n$.  Equivalently $b_n$ is the sum over all trees with $n+1$ external edges where one of these edges (called $e$) is not necessarily on-shell, and where additionally we include the propagator for $e$ itself in each term.  

Fortunately, $b_n$ can be computed recursively. Consider the edges below $e$ which are incident to $e$.  Each of them is either external or has another subtree rooted at their other end.  Summing over all possibilities below $e$ means summing over all possibilities for each of these subtrees, and hence the contribution of the subtrees is itself a smaller $b_i$.  By induction, this gives the following recurrence, see \cite{karendiffeo} for a full proof:
\begin{equation}\label{eq orig b rec}
\begin{gathered}
b_n=-\underset{\underset{P_i\neq\emptyset\;\text{and disjoint}}{P_1\cup\cdots\cup P_k=\{1,\ldots,n\}}}{\sum}b_{|P_1|}\cdots
b_{|P_k|}\times \\\frac{\frac{(k-1)!}{2}\underset{j=0}{\overset{k-1}{\sum}}a_ja_{k-1-j}\Big(-m^2(k+1)k+(j+1)(k-j)\big(\sum^k_{i=1}(\sum_{e\in P_i}p_e)^2+(\sum^n_{s=1}p_s)^2\Big)}{(\sum^n_{s=1}p_s)^2-m^2}.\vspace{0.6cm}\end{gathered}\end{equation}
 
 The idea then was to break this into two recurrences and it turned out that some intricate Bell polynomial identities give one way to solve for $b_n$. The approach in \cite{karendiffeo} makes an extensive use of Bell polynomials identities on different levels. In our case we will show that by sticking to working with the exponential generating series of the $b_i$ the use of Bell polynomial identities can be substantially minimized way. Besides, we will give new proofs for a number of these identities. For example, in the Section \ref{bellpolyidsection} we give a simple combinatorial proof for the recent identity obtained by Cvijovi\'c (see \cite{cvijovic8}) in 2013.

 \section{The Role of Bell Polynomials}
 
 \begin{dfn}[Partial Bell Polynomial]\label{bellpoly} The \textit{partial Bell polynomial}, for parameters $n,k$, in an infinite set of indeterminates, $x_1, x_2, \ldots$, is defined by
\begin{align*}B_{n,k}(x_1,x_2,\ldots) &= \underset{P_i\text{'s disjoint, nonempty   }}{\underset{P_1\cup\cdots\cup P_k=\{1,\ldots,n\}}{\underset{\{P_1,\ldots,P_k\}}{\sum}}}x_{|P_1|}x_{|P_2|}\cdots x_{|P_k|}\\&=\underset{\lambda(n,k)}{\sum}\frac{n!}{j_1!j_2!\cdots} \left(\frac{x_1}{1!}\right)^{j_1}\left(\frac{x_2}{2!}\right)^{j_2}\cdots, \end{align*}

where the second sum ranges over all partitions $\lambda=1^{j_1}2^{j_2}\ldots$ of $n$ with $k$ parts, that is, such that \[j_1+j_2+j_3+\cdots=k\;\;\;\text{and}\,\,\,\;j_1+2j_2+3j_3+\cdots=n\;\;\;\text{and}\;\;\; j_i\geq1.\]
\end{dfn}

  Note that, by this definition, the largest index appearing should be $x_{n-k+1}$, thus, any given Bell polynomial uses only a finite number of variables (and is indeed a polynomial). 
  
  On the level of generating functions, one gets
\[ \exp\bigg(u \underset{m\geq1}{\sum}x_m\frac{t^m}{m!}\bigg)=\underset{n,k\geq0}{\sum}B_{n,k}(x_1,x_2,\ldots)\;\frac{t^n}{n!}u^k .\]  This can be used as an alternative definition for Bell polynomials.
 
 To split equation \eqref{eq orig b rec} up usefully, we will use the fact that the problem is symmetric in the external momenta along with the on-shell condition.  Consider expanding all the $p_e^2$ in the numerator and the denominator into sums of squares of external momenta and dot products of distinct external momenta.  All the squares of external momenta are $m^2$ by the on-shell condition, so in both the numerator and denominator collect all of these along with the explicit $m^2$.  The remaining terms all have a factor which is a dot product of distinct external momenta.  By the symmetry in the external momenta, we know that each dot product appears the same number of times, so it suffices to keep track of how many dot product terms there are, without keeping track of which momenta are involved.  So to satisfy equation \eqref{eq orig b rec} it suffices to separately satisfy the $m^2$ part and the dot product part of it.  These two parts are respectively the equations of the following Lemma (see \cite{karendiffeo} for details). 
 
 \begin{lem}{\cite{karendiffeo}} Let $b_n$ be, as before, the sum over all amplitudes of rooted trees with $n+1$ external legs, one of which is off-shell and has a propagator contribution (see the Feynman rules above). Then the sequence $\mathbf{b}=(b_n)$ satisfies equation \eqref{eq orig b rec} if and only if it satisfies the following two recurrences :
 
 \begin{align}
     0&=\overset{n}{\underset{k=1}{\sum}}B_{n,k}(\mathbf{b})\;\frac{(k-1)!}{2}\;\;\overset{k-1}{\underset{j=0}{\sum}}a_ja_{k-1-j}\;\Big[\;2n(j+1)(k-j)-k(k+1)\;\Big]\label{recurrence}\;,\\
 0&=\overset{n}{\underset{k=1}{\sum}}\overset{k-1}{\underset{j=0}{\sum}}a_ja_{k-1-j}(j+1)(k-j)\frac{(k-1)!}{2k}\overset{n}{\underset{s=1}{\sum}}\frac{b_s}{s!(n-s)!}B_{n-s,k-1}(\mathbf{b})(ks(s-1)+n(n-1));\label{di}
 \end{align}
  where $\mathbf{b}=(b_1,b_2,\cdots)$.\end{lem}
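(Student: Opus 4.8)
The plan is to treat this as a bookkeeping separation of equation \eqref{eq orig b rec} into its ``mass part'' and its ``dot-product part'', exactly along the lines sketched just before the statement, and then to check that each part collapses onto the claimed recurrence after one elementary combinatorial manipulation. Write $D := (\sum_{s=1}^n p_s)^2 - m^2$ for the partition-independent denominator in \eqref{eq orig b rec}, and recall that the sum there runs over set partitions of $\{1,\dots,n\}$ into $k\ge 2$ blocks, since the top vertex of any nontrivial subtree has degree $\ge 3$. First I would normalise the kinematics: expand every $(\sum_{e\in S}p_e)^2$ as $\sum_{e\in S}p_e^2 + \sum_{e\ne e'\in S}p_e\cdot p_{e'}$ and impose $p_s^2=m^2$. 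Since both sides of \eqref{eq orig b rec} are symmetric under permuting the external labels, every distinct dot product $p_s\cdot p_t$ enters with the same coefficient, so one may specialise them all to a common value $q$; then $(\sum_{e\in S}p_e)^2 = |S|\,m^2+|S|(|S|-1)\,q$ and in particular $D=(n-1)m^2+n(n-1)q$. Treating $q$ and $m^2$ as independent (generic on-shell configurations, if need be in a larger ambient dimension that changes none of the combinatorics, realise arbitrary $q$ with $m^2$ fixed), multiplying \eqref{eq orig b rec} by $D$ and separating the coefficient of $m^2$ from the coefficient of $q$ shows that \eqref{eq orig b rec} is equivalent to the simultaneous vanishing of those two coefficients.

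For the $m^2$-coefficient, the $m^2$-part of the $k$-th numerator in \eqref{eq orig b rec} is $\frac{(k-1)!}{2}\sum_j a_j a_{k-1-j}(2n(j+1)(k-j)-k(k+1))$, which depends only on $k$; so summing over all set partitions of $\{1,\dots,n\}$ into $k$ blocks merely inserts the factor $B_{n,k}(\mathbf{b})$, while $b_n D$ contributes $(n-1)b_n$ on the other side. The point is that the $k=1$ term of the expression in \eqref{recurrence} equals exactly $(n-1)b_n$ (as $B_{n,1}(\mathbf{b})=b_n$, $a_0=1$, and $2n\cdot 1\cdot 1-1\cdot 2=2(n-1)$), so absorbing the $(n-1)b_n$ coming from $b_n D$ into the sum as its $k=1$ term extends the summation range down to $k=1$ and yields exactly \eqref{recurrence}.

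For the $q$-coefficient, the relevant numerator piece attached to a partition $\{P_1,\dots,P_k\}$ is $(k-1)!\sum_j a_j a_{k-1-j}(j+1)(k-j)(\sum_i \binom{|P_i|}{2}+\binom{n}{2})$, which genuinely depends on the blocks, so the only real work is to evaluate $\sum_{\{P_i\}}(\prod_i b_{|P_i|})\sum_i|P_i|(|P_i|-1)$ and $\sum_{\{P_i\}}\prod_i b_{|P_i|}$. Both come from the same ``mark one block'' bijection: a set partition of $\{1,\dots,n\}$ into $k$ blocks with one marked block of size $s$ is the same data as an $s$-subset of $\{1,\dots,n\}$ together with a partition of the complement into $k-1$ blocks, whence $\sum_{\{P_i\}}(\prod_i b_{|P_i|})\,f(\text{marked block})=\sum_s f(s)\binom{n}{s} b_s B_{n-s,k-1}(\mathbf{b})$ for any weight $f$. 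Taking $f\equiv 1$ gives the standard identity $\sum_s\binom{n}{s} b_s B_{n-s,k-1}(\mathbf{b})=kB_{n,k}(\mathbf{b})$, and $f(s)=s(s-1)$ handles the other sum; substituting these, writing $\binom{n}{s}=n!/(s!(n-s)!)$, and again absorbing the $n(n-1)b_n$ from $b_n D$ into the sum as its $k=1$ term (which needs only $B_{m,0}(\mathbf{b})=[m=0]$ and $B_{n,1}(\mathbf{b})=b_n$) reproduces \eqref{di}.

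Both steps are reversible, so conversely \eqref{recurrence} and \eqref{di} together reassemble the $m^2$- and $q$-coefficients of \eqref{eq orig b rec}, hence \eqref{eq orig b rec} itself. I expect the only delicate point --- everything else being the routine collapse of partition sums to Bell polynomials together with the two elementary marked-block identities --- to be the justification in the first paragraph that ``\eqref{eq orig b rec} holds as an identity in the on-shell external momenta'' is equivalent to ``the $m^2$- and $q$-coefficients each vanish'': this rests on the $S_n$-symmetry of \eqref{eq orig b rec} (so that only the block-size data of a partition survives, not its labelling) together with the freedom to vary the common dot-product value independently of the fixed mass.
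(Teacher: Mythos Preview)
Your proposal is correct and follows essentially the same approach as the paper. The paper merely sketches the argument in the paragraph preceding the lemma---expand the squared momenta, impose on-shell, and use $S_n$-symmetry to reduce to a mass part and a dot-product part---and then defers the details to \cite{karendiffeo}; you have filled in precisely those details, including the absorption of the $b_n D$ contribution as the $k=1$ term and the marked-block identities (which are exactly the content of Lemma~\ref{lemma1}).
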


  \setlength{\parindent}{0.5cm}
  From these two equations, Karen Yeats and Dirk Kreimer proved that  \[\tag{$\dagger$}b_{n+1}=\overset{n}{\underset{k=1}{\sum}}\frac{(n+k)!}{n!}B_{n,k}(-1!a_1,-2!a_2,\ldots)\;\;,\label{form} \] which might be suggested by the calculation of the first examples of $b_n$'s. 
  
  Proving this formula for the $b_{n+1}$ is all that is needed to prove that the on-shell tree-level amplitudes of the transformed theory are $0$ for $3$ or more external edges.  This is because the $b_{n+1}$ it already almost the tree-level amplitude for $n+2$ external edges -- the only differences are that the propagator for edge $e$ was included and edge $e$ was not on-shell  in $b_{n+1}$.  In particular, then, the on-shell tree-level amplitude is $(p_e^2 - m^2)b_{n+1}$ with the $b_{n+1}$ independent of masses and momenta; hence the amplitude is $0$ when edge $e$ is on-shell.
  
  The $b_{n}$ are also used in the proof of the cancellation of the loop amplitudes, see \cite{karendiffeo} for the argument.

 %%%%%%%%%%%%%%%%%%%%%%%%%%%%%%%%%%%%%%%%% 
  
  \section{Bell Polynomial Identities}\label{bellpolyidsection}

 A number of Bell polynomial identities are needed in the sequel of this chapter. The identities we are most concerned with were introduced by D. Cvijovi\'c in \cite{cvijovic8} (2013). These identities are key ingredients in the arguments of \cite{karendiffeo}, they are also combinatorially significant \cite{bighunt}. In 2015, S. Eger re-proved some of these identities by translating into integer-valued distributions \cite{eger}. It is surprising, however, that elementary combinatorial proofs are actually quite applicable, and this section is devoted to displaying them. The reason that these proofs, despite being simple, were not discovered before is probably because the proofs are only seen clearly if the appropriate identity is chosen to start with.
 
\setlength{\parindent}{0cm}

\begin{lem}\label{lemma1}

Suppose $n,k>0$, then$$k\;B_{n,k}(x_1,x_2,\ldots)=\;\overset{n}{\underset{s=1}{\sum}}\;{n\choose s} \;x_s \;B_{n-s,k-1}(x_1,x_2,\ldots)\;,$$
and
\[n\;B_{n,k}(x_1,x_2,\ldots)=\;\overset{n}{\underset{s=1}{\sum}}\;{n\choose s}\;s\;x_s\;B_{n-s,k-1}(x_1,x_2,\ldots).\]
                                    \end{lem}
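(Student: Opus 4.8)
The plan is to give bijective proofs straight from the set-partition definition of $B_{n,k}$, handling the two identities in parallel. Recall that $B_{n,k}(x_1,x_2,\ldots)$ is the sum over set partitions $\{P_1,\ldots,P_k\}$ of $\{1,\ldots,n\}$ into exactly $k$ nonempty blocks, each partition weighted by $x_{|P_1|}\cdots x_{|P_k|}$; reading empty sums and empty products in the usual way gives $B_{0,0}=1$ and $B_{n,0}=B_{0,k}=0$ for $n,k>0$, which is all we need for the boundary terms below.

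For the first identity I would read $k\,B_{n,k}(\mathbf{x})$ as the weighted enumeration of pairs consisting of a set partition of $\{1,\ldots,n\}$ into $k$ blocks together with a choice of one distinguished block (there are $k$ such choices, and the weight is unchanged by the choice). Such a pair is exactly the same data as a nonempty subset $S\subseteq\{1,\ldots,n\}$ (the distinguished block) together with a set partition of the complement $\{1,\ldots,n\}\setminus S$ into $k-1$ blocks. Grouping the sum by $s=|S|$: the distinguished block contributes weight $x_s$, there are $\binom{n}{s}$ choices for it, and the remaining blocks contribute $B_{n-s,k-1}(\mathbf{x})$, so summing $s$ from $1$ to $n$ reproduces the right-hand side. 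The one thing to verify is the endpoint $s=n$: then the complement is empty and $B_{0,k-1}$ equals $1$ precisely when $k=1$ and $0$ otherwise, consistent with there being a valid configuration only when the single block is the whole set.

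For the second identity I would instead read $n\,B_{n,k}(\mathbf{x})$ as the weighted count of pairs consisting of a set partition of $\{1,\ldots,n\}$ into $k$ blocks together with a choice of one distinguished \emph{element} $\ast\in\{1,\ldots,n\}$. Such a pair is equivalently the block $S$ containing $\ast$, a choice of which element of $S$ plays the role of $\ast$, and a set partition of $\{1,\ldots,n\}\setminus S$ into $k-1$ blocks. Grouping by $s=|S|$ gives $\binom{n}{s}$ choices for $S$, then $s$ choices for the distinguished element, weight $x_s$ for that block, and $B_{n-s,k-1}(\mathbf{x})$ for the rest, which sums to the stated right-hand side; the endpoint $s=n$ is handled exactly as above.

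I do not expect any genuine obstacle: both statements reduce to agreeing on what the numerical prefactor counts ($k$ a marked block, $n$ a marked element), and the only care needed is with the boundary terms, which I have flagged. If a generating-function derivation is preferred, the same two identities fall out of the relation $\exp\!\big(u\sum_{m\geq1}x_m t^m/m!\big)=\sum_{n,k\geq0}B_{n,k}(\mathbf{x})\,t^n u^k/n!$ by differentiating once in $u$ (giving the first) and once in $t$ (giving the second) and comparing coefficients of $t^n u^{k-1}$, respectively $t^{n-1}u^k$; there the only subtlety is the index shift $n!/((m-1)!(n-m)!)=m\binom{n}{m}$ which converts the $t$-derivative into the factor $s\,x_s$. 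In keeping with the combinatorial spirit of this section, I would present the bijective argument as the main proof.
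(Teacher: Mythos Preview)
Your proof is correct and follows essentially the same approach as the paper. The paper also interprets $k\,B_{n,k}$ as partitions rooted at a distinguished block and $n\,B_{n,k}$ as partitions rooted at a distinguished element, and then decomposes by first choosing the distinguished block; your write-up is a bit more explicit about boundary terms and adds the optional generating-function derivation, but the core bijective argument is identical.
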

\begin{proof}For the first identity, the left hand side is the generating function for partitions with $k$ parts which are rooted at one part (localization). Seen another way, we may first choose $s$ arbitrary  elements from $\{1,2,\ldots,n\}$ to form our root part, and then generate all possible partitions with $k-1$ parts over the remaining $n-s$ elements, thus getting the right hand side.

For the second identity, the left hand side counts partitions with $k$ parts, which are rooted in a finer way than in the previous setting, namely, they are rooted at one of the $n$ elements. Again, we can do this rather differently (localizing in two levels): First choose $s$ special elements that will form the part which hosts the root, then choose the root from amongst them (in $s$ ways); finally generate all partitions with $k-1$ elements over the remaining elements, hence getting the right hand side.   \end{proof}

The next theorem is the main theorem in \cite{cvijovic8}. The proof presented here is new and does not make any reference to the analytic methods used in proving the identities in \cite{cvijovic8}. The proof only depends on the combinatorial meaning of Definition \ref{bellpoly}.

\begin{thm}\label{report}
The following Bell identities hold, where $B_{n,k}$ stands for $B_{n,k}(x_1,x_2,\ldots)$, the partial Bell polynomial with $k$ parts.
\begin{equation}
    B_{n,k}=\frac{1}{x_1}\cdot\frac{1}{n-k}\;\overset{n-k}{\underset{\alpha=1}{\sum}}\;{n\choose \alpha} \Big[(k+1)-\frac{n+1}{\alpha+1}\Big]\;x_{\alpha+1} \;B_{n-\alpha,k}\;\;,\label{id1}
\end{equation}

\begin{equation}
    B_{n,k_1+k_2}=\frac{k_1!\;k_2!}{(k_1+k_2)!}\;\overset{n}{\underset{\alpha=0}{\sum}}\;{n\choose \alpha} \;B_{\alpha,k_1}\;B_{n-\alpha,k_2}\;\;,\label{id2}
\end{equation}

\begin{equation}
    B_{n,k+1}=\frac{1}{(k+1)!}
\overset{n-1}{\underset{\alpha_1=k}{\sum}}
\overset{\alpha_1-1}{\underset{\alpha_2=k-1}{\sum}}\cdots
\overset{\alpha_{k-1}-1}{\underset{\alpha_k=1}{\sum}}
{n\choose \alpha_1}{\alpha_1\choose \alpha_2}\cdots{\alpha_{k-1}\choose \alpha_k} x_{n-\alpha_1}\cdots x_{\alpha_{k-1}-\alpha_k}x_{\alpha_k}.\label{id3}
\end{equation}
\end{thm}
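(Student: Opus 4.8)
The plan is to prove each of the three identities combinatorially using the set-partition definition of $B_{n,k}$ from Definition \ref{bellpoly}, exactly as was done for Lemma \ref{lemma1}. For \eqref{id1}, I would interpret both sides as weighted counts of set partitions of $\{1,\dots,n\}$ into $k$ blocks. The factor $1/x_1$ on the right signals that we should focus on the block containing the element $n$ (or some other distinguished element) and condition on whether that block is a singleton; more precisely, the identity should follow from the two identities of Lemma \ref{lemma1} together with the first identity applied once more, by eliminating the ``$s=1$'' term. Concretely, write Lemma \ref{lemma1}'s first identity as $kB_{n,k} = nx_1 B_{n-1,k-1} + \sum_{s\ge 2}\binom{n}{s}x_s B_{n-s,k-1}$, do the same for the weighted version $nB_{n,k}=\sum_s\binom{n}{s}sx_sB_{n-s,k-1}$, and then take an appropriate linear combination of these (with coefficients involving $k+1$ and $n+1$) so that the $B_{n-s,k-1}$ terms reassemble, via Lemma \ref{lemma1} run backwards on the shifted indices, into $B_{n-\alpha,k}$ with $\alpha = s-1$. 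Matching the coefficient $(k+1)-\frac{n+1}{\alpha+1}$ against $\binom{n}{\alpha+1}$ versus $\binom{n}{\alpha+1}(\alpha+1)$ is the bookkeeping that makes this work, and the overall $\frac{1}{n-k}$ is what remains after $B_{n,k}$ itself (which appears on the right once we shift back) is moved to the left side.

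For \eqref{id2}, the combinatorial content is transparent: a partition of $\{1,\dots,n\}$ into $k_1+k_2$ blocks, \emph{together with a choice of which $k_1$ of the blocks are ``first-type''}, is the same as a choice of a subset $S\subseteq\{1,\dots,n\}$ of size $\alpha$ (the union of the first-type blocks) together with a partition of $S$ into $k_1$ blocks and a partition of the complement into $k_2$ blocks. The left-hand side $B_{n,k_1+k_2}$ counts partitions into $k_1+k_2$ unlabelled blocks, so coloring $k_1$ of them introduces a factor $\binom{k_1+k_2}{k_1}$, which is exactly the reciprocal of the combinatorial prefactor $\frac{k_1!k_2!}{(k_1+k_2)!}$; the right-hand sum over $\alpha$ with the binomial $\binom{n}{\alpha}$ and the product $B_{\alpha,k_1}B_{n-\alpha,k_2}$ is the colored count. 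This is essentially the generating-function product rule $\exp(u\sum x_m t^m/m!)^{}$ at the level of coefficients, and I would present it as a one-paragraph bijective argument with the coloring made explicit.

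For \eqref{id3}, I would iterate \eqref{id2}, or more directly Lemma \ref{lemma1}, $k$ times. Peeling off blocks one at a time from a partition into $k+1$ blocks: choose the complement of the first block as an $\alpha_1$-subset (factor $\binom{n}{\alpha_1}$, with the removed block contributing $x_{n-\alpha_1}$), then within that $\alpha_1$-set choose the complement of the next block as an $\alpha_2$-subset ($\binom{\alpha_1}{\alpha_2}$, contributing $x_{\alpha_1-\alpha_2}$), and so on down to a final block of size $\alpha_k$ contributing $x_{\alpha_k}$. The nested sums with $\alpha_1 > \alpha_2 > \cdots$ and the lower limits $\alpha_i \ge k+1-i$ encode that each block is nonempty; and since this ordered peeling can be done in $(k+1)!$ ways for each unordered partition, we divide by $(k+1)!$. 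The main obstacle is \eqref{id1}: getting the linear combination of the two Lemma \ref{lemma1} identities to close up correctly requires care with the index shift $\alpha\leftrightarrow s-1$ and with isolating and cancelling the $B_{n,k}$ term that reappears on the right — the other two identities are nearly immediate once the coloring/peeling picture is set up. I would therefore spend most of the write-up on \eqref{id1}, and in fact it may be cleanest to prove \eqref{id1} not from Lemma \ref{lemma1} but directly: root a partition into $k$ blocks at a pair $(\text{block } P, \text{element } e\in P)$ with $e$ \emph{not} the minimum of $P$ if $|P|\ge 2$, count this two ways, and the resulting identity — after separating the singleton-block contributions, which is where the $1/x_1$ enters — rearranges into \eqref{id1}.
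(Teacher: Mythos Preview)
Your treatment of \eqref{id2} and \eqref{id3} is exactly the paper's: the coloring bijection for \eqref{id2} and then iteration for \eqref{id3}.

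For \eqref{id1} your instinct to combine the two identities of Lemma~\ref{lemma1} is right, and in fact is what the paper does --- but you have applied them at the wrong shift, which is why you are forced into the vague ``run Lemma~\ref{lemma1} backwards'' step. If you write both Lemma~\ref{lemma1} identities at indices $(n+1,k+1)$ rather than $(n,k)$, the right-hand sides already involve $B_{n+1-s,k}=B_{n-\alpha,k}$ with $\alpha=s-1$, so no backward step is needed. Concretely: the second identity of Lemma~\ref{lemma1} at $(n+1,k+1)$, after dividing by $n+1$ and using $\binom{n+1}{\alpha+1}(\alpha+1)=(n+1)\binom{n}{\alpha}$, becomes
\[
B_{n+1,k+1}=\sum_{\alpha=0}^{n-k}\binom{n}{\alpha}x_{\alpha+1}B_{n-\alpha,k},
\]
which the paper presents as a separate ``starter identity'' (interpreted as conditioning on the block of the element $n+1$). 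Multiplying this by $k+1$ and equating it with the first identity of Lemma~\ref{lemma1} at $(n+1,k+1)$, then reindexing via $\alpha=s-1$ and isolating the $\alpha=0$ terms $(k+1)x_1B_{n,k}$ and $(n+1)x_1B_{n,k}$ on the two sides, gives $(n-k)x_1B_{n,k}$ on the left and the desired sum on the right (using $\binom{n+1}{\alpha+1}=\binom{n}{\alpha}\frac{n+1}{\alpha+1}$). So your linear-combination idea works cleanly once you shift indices; the ``reassemble via Lemma~\ref{lemma1} backwards'' manoeuvre as you described it would take you from $B_{\cdot,k-1}$ to $B_{\cdot,k-2}$, not to $B_{\cdot,k}$, and is not needed. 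Your alternative direct rooting argument at the end is a different route and would need more detail before one could assess it.
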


\begin{proof}
Identity (\ref{id3}) is immediate from (\ref{id2}), so we start by proving (\ref{id1}). The following `starter' identity is clear from the definition of partial Bell polynomials:
\[B_{n+1,k+1}=\overset{n-k}{\underset{\alpha=0}{\sum}}\;{n\choose \alpha}\;x_{\alpha+1} \;B_{n-\alpha,k}\;.\]
Indeed, the identity exactly describes the natural passage from partitions of the set \;$\mathbb{N}_n=\{1,2,\ldots,n\}$ to partitions of $\mathbb{N}_{n+1}=\{1,2,\ldots,n,n+1\}$. Namely, to form all partitions of $\mathbb{N}_{n+1}$ with $k+1$ parts in which $n+1$ appears in a part of size $\alpha+1$, we can first choose $\alpha$ elements (in all possible ways) from $\mathbb{N}_n$ to be in the same part with $n+1$, and then generate all partitions with $k$ parts on the remaining $n-\alpha$ elements of $\mathbb{N}_n$. 

Now, by Lemma \ref{lemma1}, multiply both sides by $(k+1)$ to further get 
\[\overset{(n+1)-k}{\underset{s=1}{\sum}}\;{n+1\choose s}\;x_s\;B_{n+1-s,k}= (k+1)\;B_{n+1,k+1}=(k+1)\;\overset{n-k}{\underset{\alpha=0}{\sum}}\;{n\choose \alpha}\;x_{\alpha+1} \;B_{n-\alpha,k}\;.\]
Reindexing the left sum by $\alpha=s-1$, and explicitly writing the first term ($\alpha=0$) of both sides, we arrive at
\[(n+1) x_1B_{n,k}\;+\;\overset{n-k}{\underset{\alpha=1}{\sum}}\;{n+1\choose \alpha+1}x_{\alpha+1}B_{n-\alpha,k}= (k+1)x_1 B_{n,k}\;+\;(k+1)\overset{n-k}{\underset{\alpha=1}{\sum}}{n\choose \alpha}x_{\alpha+1} B_{n-\alpha,k}.\]

Hence,
\begin{align*}
(n-k) \;x_1\; B_{n,k}\;&=\;\overset{n-k}{\underset{\alpha=1}{\sum}}\;\Big[(k+1){n\choose \alpha}-{n+1\choose \alpha+1}\Big]\;x_{\alpha+1}\;B_{n-\alpha,k}\\&=\overset{n-k}{\underset{\alpha=1}{\sum}}\;{n\choose \alpha} \Big[(k+1)-\frac{n+1}{\alpha+1}\Big]\;x_{\alpha+1} \;B_{n-\alpha,k}\;,\end{align*}which gives identity (\ref{id1}).

Finally, identity \ref{id2} is actually easier. Given $k=k_1+k_2$, the generating function for partitions into $k$ parts with $k_1$ distinguished parts is given by ${k_1+k_2\choose k_1} B_{n,k_1+k_2}$. Another way is to first select $\alpha$ elements and use them to build a partition on $k_1$ parts (these are now naturally `highlighted' by this choice), and then generate a partition of the remaining $n-\alpha$ elements on $k_2$ parts.

\end{proof}

    \section{Generating Function Method}
  
The formula for $b_n$ in (\ref{form}) turns out to be exactly the compositional inverse of the diffeomorphism $F$. This can be seen through an old result that is mentioned in \cite[p.150-151]{comtet}, seemingly obtained independently by B\"{o}dewadt (1942) and Riordan (1968) among others. However, this was not recognized by the authors in \cite{karendiffeo}, and so highler level insights were obscured.
Now, let us take advantage of this fact. We derive a functional differential equation whose solution is the inverse of $F$. The idea is to rewrite equations (\ref{recurrence}) and (\ref{di}) so that we can apply the next Fa\`a di Bruno's composition of power series relation.

 \begin{lem}[Fa\`a di Bruno]\label{faa}
 Given two power series $f(t)=\overset{\infty}{\underset{n=0}{\sum}}\;f_n\displaystyle\frac{t^n}{n!}$ and $g(t)=\overset{\infty}{\underset{n=0}{\sum}}\;g_n\displaystyle\frac{t^n}{n!}$, the composition $\;h(t):=f(g(t))$ can be written as $h(t)=\overset{\infty}{\underset{n=0}{\sum}}h_n\displaystyle\frac{t^n}{n!},$ where
\[h_n=\overset{n}{\underset{k=0}{\sum}}\;f_k\; B_{n,k}(g_1,g_2,\ldots).\]
 \end{lem}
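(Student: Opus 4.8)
The plan is to prove the identity by a direct generating-function manipulation, using the exponential generating function characterization of the partial Bell polynomials already recorded in the excerpt, namely $\exp\bigl(u\sum_{m\geq 1}x_m t^m/m!\bigr)=\sum_{n,k\geq 0}B_{n,k}(x_1,x_2,\ldots)\,t^n u^k/n!$. First I would observe that it suffices to handle the case $g_0=0$, i.e.\ $g(0)=0$: the Bell polynomials $B_{n,k}(g_1,g_2,\ldots)$ do not see $g_0$ at all, and on the other side a nonzero $g_0$ only shifts the argument of $f$ by a constant, which can be absorbed into a reindexing of the coefficients $f_k$; alternatively one notes both sides are unchanged formally under $g\mapsto g-g_0$ together with $f(\cdot)\mapsto f(\cdot+g_0)$, so WLOG $g_0=0$ and $g(t)=\sum_{m\geq 1}g_m t^m/m!$.

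Then the core computation is to write $f(g(t))=\sum_{k\geq 0}f_k\,g(t)^k/k!$ — here I use $f(y)=\sum_k f_k y^k/k!$ — and expand each power $g(t)^k$. The key step is the identity
\[
\frac{g(t)^k}{k!}=\sum_{n\geq 0}B_{n,k}(g_1,g_2,\ldots)\,\frac{t^n}{n!},
\]
which is exactly the coefficient of $u^k$ in the generating function formula above (after expanding $\exp(uG(t))=\sum_k u^k G(t)^k/k!$ with $G(t)=\sum_{m\geq 1}g_m t^m/m!$ and matching powers of $u$). Substituting this in, $f(g(t))=\sum_{k\geq 0}f_k\sum_{n\geq 0}B_{n,k}(g_1,g_2,\ldots)\,t^n/n!$, and swapping the two sums (legitimate at the level of formal power series since for fixed $n$ only finitely many $k\leq n$ contribute, as $B_{n,k}=0$ for $k>n$) gives $h_n/n!=\sum_{k=0}^n f_k\,B_{n,k}(g_1,g_2,\ldots)/n!$, which is the claim after clearing $n!$. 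One should also check the boundary cases: $B_{0,0}=1$ and $B_{n,0}=0$ for $n>0$, so $h_0=f_0$, consistent with $h(0)=f(g(0))=f(0)=f_0$.

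I do not expect a genuine obstacle here — this is a classical result and the generating-function proof is essentially forced once the exponential formula for Bell polynomials is in hand. The only point requiring a word of care is the reduction to $g_0=0$, since the statement as written allows $g_0\neq 0$ while $B_{n,k}$ depends only on $g_1,g_2,\ldots$; if one prefers, one can simply add the standing hypothesis $g_0=0$ (which is how Faà di Bruno is invariably applied, and is the case needed in the sequel), making the proof a two-line consequence of the cited generating-function identity. A remark to that effect, or the explicit absorption argument sketched above, is all that is needed.
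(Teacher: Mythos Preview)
The paper does not supply its own proof of this lemma; it is stated as the classical Fa\`a di Bruno relation and invoked as a known tool. Your argument is correct and is the standard generating-function derivation: expand $f(g(t))=\sum_{k}f_k\,g(t)^k/k!$, identify $g(t)^k/k!$ as $\sum_n B_{n,k}(g_1,g_2,\ldots)\,t^n/n!$ by reading off the $u^k$ coefficient in the exponential generating function for Bell polynomials already quoted in the paper, and interchange the finite-in-$k$ sum. Nothing more is needed.

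Your remark about the hypothesis $g_0=0$ is also apt and worth keeping. As literally written the lemma allows $g_0\neq 0$, but then the formula fails already at $n=0$ (it gives $h_0=f_0$ whereas $h(0)=f(g_0)$). The absorption you sketch does not salvage the formula with the \emph{original} $f_k$; it shows instead that one must replace $f_k$ by the Taylor coefficients of $f$ at $g_0$. Since every application in the paper composes with $G(t)=\sum_{n\geq 1}b_n t^n/n!$, which has $G(0)=0$, simply recording the standing hypothesis $g_0=0$ is the clean fix, exactly as you suggest.
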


 The key is the next Proposition which gives differential equations for the exponential generating series of the $b_n$.
 \begin{prop}\label{1pm}Let $F(t)=\overset{\infty}{\underset{j=0}{\sum}}a_j\;t^{j+1}$ be a diffeomorphism of fields as before, and set $\;G(t):=\overset{\infty}{\underset{n=1}{\sum}}b_n\displaystyle\frac{t^n}{n!}$ where the $b_n$s satisfy the recurrences \ref{recurrence} and \ref{di}. Define 
 \begin{equation}\label{PandQ}
     Q(t):=\frac{1}{2} \;\frac{d}{dt}\Big(\big(F(t)\big)^2\Big)\;\;\;\;\;\text{and}\;\;\;\;\;P(t):=\int\Big(\frac{d}{dt}F(t)\Big)^2 dt\;.
 \end{equation}
 Then, on the level of generating functions, the recurrence (\ref{recurrence}) is equivalent to the differential equation 
 \begin{equation}
     0=t\;\frac{d}{dt}P\big(G(t)\big)\;-\; Q\big(G(t)\big)\;,  \label{myequation}
 \end{equation} and the recurrence (\ref{di}) is equivalent to the differential equation 
 \begin{equation}
     0=\;\frac{d^2}{dt^2}P\big(G(t)\big)\;+\; \frac{d^2G}{dt^2}\cdot\frac{d}{dG}P\big(G(t)\big)\;.\; \label{myequation2}
 \end{equation}\end{prop}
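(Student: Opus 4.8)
The plan is to turn each of the two combinatorial recurrences into a power-series identity by combining Fa\`a di Bruno's formula (Lemma \ref{faa}) with the generating-function form of the partial Bell polynomials, after first recording the exponential coefficients of $P$ and $Q$. Writing $P(t)=\sum_{m\ge1}p_m t^m/m!$ and $Q(t)=\sum_{m\ge1}q_m t^m/m!$, unwinding the definitions in \eqref{PandQ} gives at once
\[
p_m=(m-1)!\sum_{j=0}^{m-1}(j+1)(m-j)\,a_ja_{m-1-j},\qquad
q_m=\frac{(m+1)!}{2}\sum_{j=0}^{m-1}a_ja_{m-1-j};
\]
in particular $p_m=d_{m-1}$ and $q_m=\tfrac12 c_{m-1}$ for the Feynman-rule coefficients, and $P'(t)=(F'(t))^2$, which is what makes $P$ and $Q$ the natural objects to introduce.

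For \eqref{recurrence} $\Leftrightarrow$ \eqref{myequation} I would observe that the identities just recorded make the bracket collapse,
\[
\frac{(k-1)!}{2}\sum_{j=0}^{k-1}a_ja_{k-1-j}\big[2n(j+1)(k-j)-k(k+1)\big]=n\,p_k-q_k,
\]
so that \eqref{recurrence} reads $0=n\sum_k p_kB_{n,k}(\mathbf b)-\sum_k q_kB_{n,k}(\mathbf b)$. Since $G(0)=0$ and $p_0=q_0=0$, Lemma \ref{faa} identifies $\sum_k p_kB_{n,k}(\mathbf b)$ and $\sum_k q_kB_{n,k}(\mathbf b)$ with the $n$-th exponential coefficients of $P(G(t))$ and $Q(G(t))$; as $n$ times the $n$-th coefficient of a series is the $n$-th coefficient of $t\frac{d}{dt}$ applied to it, the family \eqref{recurrence} over all $n$ is equivalent to $t\frac{d}{dt}P(G(t))=Q(G(t))$, which is \eqref{myequation}.

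For \eqref{di} $\Leftrightarrow$ \eqref{myequation2} I would first note that $\frac{(k-1)!}{2k}\sum_{j}(j+1)(k-j)a_ja_{k-1-j}=\frac{p_k}{2k}$, so \eqref{di} becomes $0=\sum_{k}\frac{p_k}{2k}\sum_{s}\frac{b_s}{s!\,(n-s)!}B_{n-s,k-1}(\mathbf b)\big(ks(s-1)+n(n-1)\big)$, and then split the last factor. For the $n(n-1)$ piece, using $\sum_s\frac{b_s}{s!(n-s)!}B_{n-s,k-1}(\mathbf b)=\frac{k}{n!}B_{n,k}(\mathbf b)$ — the first identity of Lemma \ref{lemma1} divided by $n!$, equivalently the $t^n$-coefficient of $G(t)^k/(k-1)!$ — the $k$'s cancel and the piece becomes $\frac{n(n-1)}{2\,n!}\sum_k p_kB_{n,k}(\mathbf b)$, which is $\tfrac12$ the $t^{n-2}$-coefficient of $\frac{d^2}{dt^2}P(G(t))$. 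For the $ks(s-1)$ piece I would recognize $\sum_s\frac{b_s\,s(s-1)}{s!(n-s)!}B_{n-s,k-1}(\mathbf b)$ as the $t^{n-2}$-coefficient of $G''(t)\cdot\frac{1}{(k-1)!}G(t)^{k-1}$, so that summing $\frac{p_k}{2}$ times it over $k$ and using $\sum_{k\ge1}\frac{p_k}{(k-1)!}G(t)^{k-1}=P'(G(t))$ makes this piece equal to $\tfrac12$ the $t^{n-2}$-coefficient of $\frac{d^2G}{dt^2}\cdot\frac{d}{dG}P(G(t))$. Adding the two, \eqref{di} for $n\ge2$ (the cases $n\le1$ being vacuous) is exactly the statement that every coefficient of $\frac{d^2}{dt^2}P(G(t))+\frac{d^2G}{dt^2}\cdot\frac{d}{dG}P(G(t))$ vanishes, i.e. \eqref{myequation2}.

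I expect the first step and the reindexing in the \eqref{recurrence} case to be routine, so that the main obstacle is the bookkeeping in the \eqref{di} case: correctly matching the triple sum against a product of series — reading the $s(s-1)$ factor as manufacturing $G''$, the remaining $b_s$ and $B_{n-s,k-1}$ as a Cauchy product, and the $k$-sum $\sum_k p_kG^{k-1}/(k-1)!$ as $P'(G)$ — while keeping straight which coefficient extraction (ordinary versus exponential, $t^n$ versus $t^{n-2}$) is in play, so that $n(n-1)/n!=1/(n-2)!$ aligns with the second derivative and the $1/k$ prefactor cancels against the $k$ produced by the Bell identity. Once each summand has been matched with a derivative of a composition, both equivalences hold termwise.
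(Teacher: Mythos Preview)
Your proposal is correct and follows essentially the same route as the paper: you compute the exponential coefficients $p_k,q_k$ of $P$ and $Q$, rewrite the bracket in \eqref{recurrence} as $np_k-q_k$, and apply Fa\`a di Bruno for the first equivalence; for the second you split $ks(s-1)+n(n-1)$, use the first identity of Lemma~\ref{lemma1} on the $n(n-1)$ piece, and read the $ks(s-1)$ piece as the Cauchy product of $G''$ with $P'(G)$. The only cosmetic differences are that the paper cancels the overall $\tfrac12$ at the outset and tracks $[t^n]$ of $t^2$ times the relevant series rather than your $[t^{n-2}]$, which amounts to the same bookkeeping.
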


  \begin{proof}
 First we prove (\ref{myequation}): From the definitions, $P$ and $Q$ can be expanded as 
 \begin{equation}
     Q(t)=\frac{1}{2} \;\frac{d}{dt}\Big((F(t))^2\Big)=\overset{\infty}{\underset{k=1}{\sum}}\;\overbrace{\Big(k!\;\frac{(k+1)}{2}\;\;\overset{k-1}{\underset{j=0}{\sum}}a_ja_{k-1-j}\Big)}^{q_k}\;\frac{t^k}{k!}\;\;,
 \end{equation} and
  \begin{equation}
      P(t)=\int\Big(\frac{d}{dt}F(t)\Big)^2 dt=\overset{\infty}{\underset{k=1}{\sum}}\;\overbrace{\Big(k!\;\frac{1}{k}\;\overset{k-1}{\underset{j=0}{\sum}}a_ja_{k-1-j}\;(j+1)(k-j)\Big)}^{p_k}\;\frac{t^k}{k!}\;.
  \end{equation}
 Consequently, by the Fa\`a di Bruno's formula (Lemma \ref{faa}), we have 
 \begin{align}
     Q(G(t))&=\overset{\infty}{\underset{n=1}{\sum}}\;\Big(\overset{n}{\underset{k=1}{\sum}}\;q_k\;B_{n,k}(b_1,b_2,\ldots)\Big)\;\frac{t^n}{n!}\;,\\
     P(G(t))&=\overset{\infty}{\underset{n=1}{\sum}}\;\Big(\overset{n}{\underset{k=1}{\sum}}\;p_k\;B_{n,k}(b_1,b_2,\ldots)\Big)\;\frac{t^n}{n!}\;.
 \end{align}
In particular, 
  \[t\;\frac{d}{dt}P(G(t))=\overset{\infty}{\underset{n=1}{\sum}}\;n\;\Big(\overset{n}{\underset{k=1}{\sum}}\;p_k\;B_{n,k}(b_1,b_2,\ldots)\Big)\;\frac{t^n}{n!}\;.\]
 
 Then (\ref{myequation}) is given by
  \begin{align*}0&=t\;\frac{d}{dt}P\big(G(t)\big)\;-\; Q\big(G(t)\big)\\&=\overset{\infty}{\underset{n=1}{\sum}}\;\Big(\overset{n}{\underset{k=1}{\sum}}B_{n,k}(b_1,b_2,\ldots) \;(n\;p_k-q_k)\Big)\;\frac{t^n}{n!}.
 \end{align*}
 
 That is, equation (\ref{myequation}) is equivalent to the fact that for all $n\geq1$ 
  \begin{align*}0&=\overset{n}{\underset{k=1}{\sum}}B_{n,k}(b_1,b_2,\ldots) \;(n\;p_k-q_k)\\&=\overset{n}{\underset{k=1}{\sum}}B_{n,k}(b_1,b_2,\ldots)\left[n\Big(k!\frac{1}{k}\overset{k-1}{\underset{j=0}{\sum}}a_ja_{k-1-j}(j+1)(k-j)\Big)-\Big(k!\frac{(k+1)}{2}\overset{k-1}{\underset{j=0}{\sum}}a_ja_{k-1-j}\Big)\right]\\&=\overset{n}{\underset{k=1}{\sum}}B_{n,k}(b_1,b_2,\ldots)\frac{(k-1)!}{2}\;\;\overset{k-1}{\underset{j=0}{\sum}}a_ja_{k-1-j}\;\Big[\;2n(j+1)(k-j)-k(k+1)\;\Big],\end{align*}
 which is exactly the recurrence (\ref{recurrence}).\\

 Next we prove (\ref{myequation2}):
 
  We can cancel out the factor of $1/2$  and rearrange recurrence (\ref{di}) to obtain 
 \begin{align*}0=\;n(n-1)&\overset{n}{\underset{k=1}{\sum}}\;p_k\cdot\frac{1}{k}\overset{n-k+1}{\underset{s=1}{\sum}}\frac{b_s}{s!(n-s)!}B_{n-s,k-1}(b_1,b_2,\ldots)\;\\+\;&\overset{n}{\underset{k=1}{\sum}}\;p_k\;\overset{n-k+1}{\underset{s=1}{\sum}}\frac{b_s\;s(s-1)}{s!(n-s)!}B_{n-s,k-1}(b_1,b_2,\ldots)\;.\end{align*}
 
 Now, recall that
 \begin{align*}k\;B_{n,k}(b_1,b_2,\ldots)&=n!\;\overset{n}{\underset{s=0}{\sum}}\;\frac{b_s}{s!(n-s)!}B_{n-s,k-1}(b_1,b_2,\ldots)\\&=n!\overset{n-k+1}{\underset{s=1}{\sum}}\frac{b_s}{s!(n-s)!}B_{n-s,k-1}(b_1,b_2,\ldots)\;,\end{align*}
 since $b_0=0$ by convention, and since Bell polynomials vanish whenever the number of parts is greater than the size. The latter reason is exactly what allows us to also change the bounds of the summations in the above equation to finally get  
  \begin{align*}0&=\;\frac{n(n-1)}{n!}\overset{n}{\underset{k=1}{\sum}}\;p_k\;B_{n,k}(b_1,b_2,\ldots)\;\\&+\;\overset{n}{\underset{s=0}{\sum}}\;\frac{s(s-1)\;b_s  }{s!(n-s)!}\overset{n-s+1}{\underset{k=1}{\sum}}p_k \;B_{n-s,k-1}(b_1,b_2,\ldots)\\&=[t^n]\; t^2\;\frac{d^2}{dt^2}P\big(G(t)\big) \;+\; \overset{n}{\underset{s=0}{\sum}} [t^s] \big(t^2\frac{d^2G}{dt^2}\big) \cdot    [t^{n-s}] \Bigg(\frac{dP}{dt}\Bigg)(G),\end{align*}
 which establishes the claim (notice that $ \frac{dP}{dt}(G(t))= \frac{d}{dG}P(G)$ ).\\
 \end{proof}

 \begin{cor}\label{corcorcor}The compositional inverse $F^{-1}$ of the diffeomorphism $F$ is a solution for (\ref{myequation}) and (\ref{myequation2}).\end{cor}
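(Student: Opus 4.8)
The plan is to verify directly that $G := F^{-1}$ satisfies \eqref{myequation} and \eqref{myequation2}, using nothing beyond the formal chain rule and the defining relation of the compositional inverse. First I would record the concrete form of the series in \eqref{PandQ}: $Q(t) = F(t)F'(t)$ and $P(t)$ is an antiderivative of $(F'(t))^2$, i.e. $P'(t) = (F'(t))^2$ (these are exactly the expansions with coefficients $q_k$, $p_k$ written out in the proof of Proposition \ref{1pm}). Since $a_0 = 1$ the diffeomorphism $F$ is tangent to the identity, so $F'(0) = 1$, the inverse $G = F^{-1}$ exists as a formal power series tangent to the identity, and $F(G(t)) = t$. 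Note also that $P$ enters both \eqref{myequation} and \eqref{myequation2} only through its derivatives, so the ambiguity in its antiderivative is harmless.

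The single identity doing all the work is obtained by differentiating $F(G(t)) = t$, namely $F'(G(t))\,G'(t) = 1$. For \eqref{myequation} I would then compute
\[
\frac{d}{dt}P(G(t)) = P'(G(t))\,G'(t) = \big(F'(G(t))\big)^2\,G'(t) = F'(G(t)),
\]
so that $t\,\frac{d}{dt}P(G(t)) = t\,F'(G(t))$, while on the other hand $Q(G(t)) = F(G(t))\,F'(G(t)) = t\,F'(G(t))$. The two coincide, which is precisely \eqref{myequation}.

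For \eqref{myequation2} I would differentiate the identity $\frac{d}{dt}P(G(t)) = F'(G(t))$ once more to get $\frac{d^2}{dt^2}P(G(t)) = F''(G(t))\,G'(t)$, and use $\frac{d}{dG}P(G(t)) = P'(G(t)) = \big(F'(G(t))\big)^2$. Differentiating the master identity $F'(G(t))\,G'(t) = 1$ gives $F''(G(t))\,(G'(t))^2 + F'(G(t))\,G''(t) = 0$, hence $G''(t) = -F''(G(t))\,(G'(t))^2 / F'(G(t))$. Substituting,
\[
\frac{d^2}{dt^2}P(G(t)) + \frac{d^2 G}{dt^2}\cdot\frac{d}{dG}P(G(t)) = F''(G(t))\,G'(t) - \frac{F''(G(t))\,(G'(t))^2}{F'(G(t))}\big(F'(G(t))\big)^2 = F''(G(t))\,G'(t)\,\big(1 - G'(t)F'(G(t))\big) = 0.
\]

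There is essentially no obstacle here: given the equations \eqref{myequation}--\eqref{myequation2} of Proposition \ref{1pm}, the corollary is a short exercise in the formal chain rule, the only points meriting a word of care being that all the compositions and the series $1/F'(G(t))$ make sense at the level of formal power series (which is guaranteed by $F$ being tangent to the identity) and the harmlessness of the constant of integration in $P$ noted above. Combined with Proposition \ref{1pm}, this shows that the exponential generating function of the $b_n$ and $F^{-1}$ both solve the same pair of equations, which is the bridge to identifying them once uniqueness of the solution is pinned down.
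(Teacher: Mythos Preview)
Your verification is correct and in fact slightly cleaner than the paper's. The ingredients are the same---$Q=FF'$, $P'=(F')^2$, and the chain rule applied to $F(G(t))=t$---but the organization differs. The paper first simplifies each equation \emph{without} assuming $G=F^{-1}$: equation~\eqref{myequation} is reduced (after dividing by $F'(G)$) to $F(G(t))=t\,\frac{d}{dt}F(G(t))$, and equation~\eqref{myequation2} is reduced to $\frac{d}{dG}\big(\frac{d}{dt}F(G(t))\big)=0$; only then is $G=F^{-1}$ checked against these simplified forms. Your route of substituting $G=F^{-1}$ at the outset and using $F'(G)\,G'=1$ directly is shorter for the bare statement of the corollary. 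What the paper's longer manipulation buys is a hint toward uniqueness: the reduced ODEs visibly pin down $F(G(t))$ up to a constant, which is the point you flag in your final sentence as still needing to be addressed.
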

 
 \begin{proof}

 Equation (\ref{myequation}) can be simplified further: 
 \begin{align*}
0&=t\;\frac{d}{dt}P\big(G(t)\big)\;-\; Q\big(G(t)\big)\\&=t\;\frac{d}{dG}P\big(G(t)\big)\;\frac{dG}{dt}-\; \frac{1}{2}\frac{d}{dG}\Big(F\big(G(t)\big)\Big)^2\\&=t\;\Big(\frac{d}{dG}F\big(G(t)\big)\Big)^2\;\frac{dG}{dt}\;-\;F\big(G(t)\big)\;\frac{d}{dG}F\big(G(t)\big).
\end{align*} 
 
 Now we might assume that $\frac{d}{dG}F\big(G(t)\big)\neq 0$, that is just $F(G(t))$ is not a constant, and then we have 

\[0=t\;\frac{d}{dG}F\big(G(t)\big)\;\frac{dG}{dt}\;-\;F\big(G(t)\big).\]

 That is to say,
 \[F\big(G(t)\big)=t\;\frac{d}{dt}F\big(G(t)\big),\] and a simple separation of variables then gives that $G=F^{-1}$ is a solution.
 
 \bigskip
 \bigskip
 Proceding to Equation (\ref{myequation2}) which boils down maybe more insightfully:
 
 \begin{align*}
 0&=\;\frac{d^2}{dt^2}P\big(G(t)\big)\;+\; \frac{d^2G}{dt^2}\cdot\frac{d}{dG}P\big(G(t)\big)\\
  &=\;\frac{d}{dt}\Big(\frac{d}{dG}P\big(G(t)\big)\cdot\frac{dG}{dt}\Big)\;+\; \frac{d^2G}{dt^2}\cdot\frac{d}{dG}P\big(G(t)\big)\\
  &=2\;\frac{d^2G}{dt^2}\cdot\frac{d}{dG}P\big(G(t)\big)\;+\;\frac{d}{dt}\Big(\frac{d}{dG}P\big(G(t)\big)\Big)\cdot\frac{dG}{dt}\\
  &=2\;\frac{d^2G}{dt^2}\cdot\Big(\frac{d}{dG}F\big(G(t)\big)\Big)^2\;+\;\frac{d}{dt}\Bigg(\Big(\frac{d}{dG}F\big(G(t)\big)\Big)^2\Bigg)\cdot\frac{dG}{dt}\\
  &=2\;\frac{d^2G}{dt^2}\cdot\Big(\frac{d}{dG}F\big(G(t)\big)\Big)^2\;+\;2\;\frac{d}{dG}F\big(G(t)\big)\cdot\frac{d}{dG}\Big(\frac{d}{dG}F\big(G(t)\big)\Big)\cdot\Big(\frac{dG}{dt}\Big)^2.
 \end{align*}
 
Again we assume that $G(t)$ is not a constant, and so neither is $F(G(t))$. Hence,

\begin{align*}
 0&=\frac{d^2G}{dt^2}\cdot\frac{d}{dG}F\big(G(t)\big)\;+\;\frac{d^2}{dG^2}F\big(G(t)\big)\cdot\Big(\frac{dG}{dt}\Big)^2\\
  &=\frac{d}{dG}\Big(\frac{dG}{dt}\Big)\cdot\frac{dG}{dt}\cdot\frac{d}{dG}F\big(G(t)\big)\;+\;\frac{d^2}{dG^2}F\big(G(t)\big)\cdot\Big(\frac{dG}{dt}\Big)^2,
 \end{align*}
  
  Then by our assumption above
 \begin{align*}
 0&=\frac{d}{dG}\Big(\frac{dG}{dt}\Big)\cdot\frac{d}{dG}F\big(G(t)\big)\;+\;\frac{d^2}{dG^2}F\big(G(t)\big)\cdot\frac{dG}{dt}\\&=\frac{d}{dG}\Big[\frac{dG}{dt}\cdot\frac{d}{dG}F\big(G(t)\big)\Big]\\&=\frac{d}{dG}\Bigg(\frac{d}{dt}F\big(G(t)\big)\Bigg).
 \end{align*}
 
 Again this leads to that $G=F^{-1}$ is a solution, which proves the corollary.
 \end{proof}

 Thus, we have shown that the series whose coefficients are the solution $(b_n)$ of  recurrences (\ref{recurrence}) and (\ref{di})  is exactly the compositional inverse of the diffeomorphism $F$. This is highly suggestive of a connection to the Legendre transform, since, as we shall see in Section~\ref{sec legendre}, the definition of the combinatorial Legendre transform goes through an almost identical process of summing over tree graphs and involves compositional inverses \cite{kjm1,kjm2}.

\section{Generalization to scalar interacting fields}

In this section we overview the application of such diffeomorphisms as before to theories with interaction terms. In particular, for the sake of completeness, we are interested in displaying some of the work in \cite{Paulthesis} which has been significantly simplified in \cite{paulrefers} by means of our generating functions method. For the sake of clarity and unification, we will try to use the same notation as in \cite{paulrefers} as much as possible. Assume we are given multiple interaction terms in the Lagrangian:
\[L(\phi)=\frac{1}{2}\partial_\mu\phi(x)\partial^\mu\phi(x)-\frac{m^2}{2}\phi^2(x)-
\sum_{s=3}^\infty\frac{\lambda_s}{s!}\phi^s(x).\]
Each interacting term results in an additional type of vertex when the diffeomorphism $F$ is applied, with corresponding Feynman rules $w_n^{(s)}=0 $ for $n<s$ and 
\begin{equation}
    -iw_n^{(s)}=-i\frac{\lambda_s}{s!}n!\underset{j_1+\cdots+j_{s}=n}{\sum_{j_1,\cdots,j_{s}}}a_{j_1}\cdots a_{j_{s}}=-i\lambda_s B_{n,s}(1!a_0,2!a_1,3!a_2,\ldots), \label{newinteraction}
\end{equation} for all $n\geq s$. In particular, the interaction terms add
$-iw_n:=-i\sum_s w_n^{(s)}$ to the $n$-valent interaction vertex. Let $W_n^{(s)}$ be the sum of contributions of trees with $n$ external edges and only one vertex of type $-iw_j^{(s)}$. Then it is calculated in \cite{paulrefers} that 

\begin{equation}
    W_n^{(s)}=-i\lambda_s\sum_{k=s}^n B_{k,s}(1!,2!a_1,3!a_2,\ldots)B_{n,k}(b_1,b_2,\ldots). \label{Smatrix}
\end{equation}

 \begin{thm}[Theorem 3.3 in \cite{Paul}]
In terms of the $S$-matrix elements, $W^{(s)}_s=-i\lambda_s$ and $W_n^{(s)}=0$  for $n\neq s$.
 \end{thm}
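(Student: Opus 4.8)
The plan is to read the $k$-sum in \eqref{Smatrix} as a Fa\`a di Bruno composition of two explicit power series and then use that $G=F^{-1}$. First I would record two generating-function identities. Since $a_0=1$, the argument $(1!,2!a_1,3!a_2,\ldots)$ in \eqref{Smatrix} is exactly the vector $(1!a_0,2!a_1,3!a_2,\ldots)$ appearing in \eqref{newinteraction}, and the exponential generating identity for Bell polynomials (stated after Definition \ref{bellpoly}) applied with $x_m=m!a_{m-1}$, so that $\sum_{m\ge1}x_m t^m/m!=\sum_{j\ge0}a_j t^{j+1}=F(t)$, gives
\[
\frac{F(t)^s}{s!}=\sum_{k\ge s}B_{k,s}(1!a_0,2!a_1,\ldots)\,\frac{t^k}{k!}.
\]
Likewise, with $G(t)=\sum_{n\ge1}b_n t^n/n!$ as in Proposition \ref{1pm} (and $b_0=0$), the same identity gives $G(t)^k/k!=\sum_n B_{n,k}(b_1,b_2,\ldots)\,t^n/n!$.

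Next I would set $h(t):=F(t)^s/s!=\sum_k h_k t^k/k!$ with $h_k=B_{k,s}(1!a_0,2!a_1,\ldots)$, note that $h$ has no constant term, and apply Fa\`a di Bruno (Lemma \ref{faa}) to the composite $h\bigl(G(t)\bigr)$:
\[
h\bigl(G(t)\bigr)=\sum_n\Bigl(\sum_k h_k\,B_{n,k}(b_1,b_2,\ldots)\Bigr)\frac{t^n}{n!}
=\sum_n\Bigl(\sum_{k=s}^{n}B_{k,s}(1!a_0,\ldots)\,B_{n,k}(\mathbf b)\Bigr)\frac{t^n}{n!},
\]
where the truncation of the $k$-range to $s\le k\le n$ is harmless because $B_{k,s}=0$ for $k<s$ and $B_{n,k}=0$ for $k>n$. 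Hence the inner sum in \eqref{Smatrix} equals $n!\,[t^n]\,h\bigl(G(t)\bigr)=n!\,[t^n]\,\dfrac{F(G(t))^{s}}{s!}$.

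Finally I would invoke Corollary \ref{corcorcor} together with the formula \eqref{form} (which is Lagrange inversion), so that $G=F^{-1}$ and therefore $F\bigl(G(t)\bigr)=t$ as formal power series. This gives $h\bigl(G(t)\bigr)=t^s/s!$, so
\[
\sum_{k=s}^{n}B_{k,s}(1!,2!a_1,\ldots)\,B_{n,k}(b_1,b_2,\ldots)=n!\,[t^n]\,\frac{t^s}{s!}=\delta_{n,s},
\]
and plugging this into \eqref{Smatrix} yields $W^{(s)}_s=-i\lambda_s$ and $W^{(s)}_n=0$ for $n\neq s$.

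The argument is short and I do not foresee a genuine obstacle; the only points needing care are the bookkeeping between the two slightly different normalizations of the Bell-polynomial arguments (which agree precisely because $a_0=1$), checking that the composition is a legitimate operation on formal power series (it is, since $G$ and $h$ have vanishing constant terms), and being explicit that replacing $G$ by $F^{-1}$ is justified -- this is the payoff of the generating-function analysis of the previous sections rather than something to be re-derived here.
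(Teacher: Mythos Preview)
Your proof is correct and follows essentially the same approach as the paper's: both recognize the double Bell-polynomial sum in \eqref{Smatrix} as a Fa\`a di Bruno composition $F(G(t))^s/s!$, and then use $G=F^{-1}$ to collapse this to $t^s/s!$. Your write-up is more explicit about the generating-function bookkeeping (in particular, verifying that the argument vector $(1!,2!a_1,3!a_2,\ldots)$ matches $(m!a_{m-1})_{m\ge1}$ because $a_0=1$), but the underlying idea is identical.
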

 
 \begin{proof}
The proof completely follows by calling the Fa\`a di Bruno relations, and recalling that $F^{-1}(\phi)=\sum_n b_n \frac{\phi^n}{n!}$ which is the same as $\phi(F)=F^{-1}(F(\phi))=\sum_n b_n\frac{F(\phi)}{n!}$. Namely,
\[W_n^{(s)}=-i\lambda_s\sum_{k=s}^n k! [\phi^k] (F^s(\phi)) B_{n,k}(b_1,b_2,\ldots)=-i\lambda_s [F^n] F(\phi(F))^s=-i\lambda_s [F^n]F^s,\] which establishes the statement.
 \end{proof}

\section{Relation with the Legendre Transform}\label{sec legendre}

Given a function $f:x\mapsto f(x)$, it might be desirable, in many contexts, to express every thing in terms of $y=f'(x)$ instead of $x$, without losing information about the function. The (analytic) Legendre transform does indeed achieve this goal but for a restricted class of functions, namely convex smooth functions. However, physicists usually use the Legendre transform even when the functions involved fail to satisfy these requirements. The surprise is when such calculations match with experimental results. In \cite{kjm1}, a combinatorial Legendre transform is defined which generalizes the analytic version and unveils the hidden robust algebraic structure of the Legendre transform. Furthermore, as has been long known to physicists, the Legendre transform builds the series of trees, so as well as making sense rigorously as a formal power series operator, the Legendre transform has a combinatorial meaning as an operator on generating series.

The defining relation for the Legendre transform $\mathrm{L}$ is given as \[\mathrm{L}f(z)=-z g(z)+ (f\circ g)(z),\]

where $g$ is the compositional inverse of the derivative, namely, $x=g(y)$.

The idea in \cite{kjm1} was to realize this relation as coming from the following combinatorial bijection between classes:

\[v_1 \partial_{v_1} \mathcal{T}^l\uplus\biguplus_{k\geq 2} v_k\circ (\partial_{v_1}\mathcal{T}^l)\sim \mathcal{T}^l\uplus (e\ast(\mathcal{E}_2\circ e^{-1}\partial_{v_1} \mathcal{T}^l),
\]
where $\mathcal{T}^l$ is the class of labelled trees, and $v_1\partial_{v_1}$ is the operation of rooting or distinguishing at a 1-vertex (vertex of degree 1), $e,e^{-1}$ for edges and anti-edges (the inverse with respect to the gluing operation) respectively. 

\begin{dfn}[Combinatorial Legendre Transform \cite{kjm2}]If $A\in R[[x]]$ and $A'^{-1}$ exists, then the \textit{combinatorial Legendre transform} is defined to be \[(\mathrm{L}A)(x)=(A\circ A'^{-1})(x)-x A'^{-1}(x).\]\end{dfn}
\begin{thm}[\cite{kjm2}]

Let $T_A(y)= [[\mathcal{T}^l,\omega_{v_1}\otimes\omega_e\otimes\omega]](y,u,\lambda_2,\lambda_3,\ldots)$, be the generating series of labelled tree graphs with indeterminates $y,u,\lambda_i$ standing for leaves, edges and vertices of higher orders, respectively; and where  $A(x)=-u^{-1}\frac{x^2}{2!}+\sum_{k\geq3} \lambda_k \frac{x^k}{k!}$. Then $(\mathrm{L}A)(y)=T_A(-y)$. That is, the Legendre transform is exactly the generating series of tree graphs with the prescribed weights and variables.
\end{thm}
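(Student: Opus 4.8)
The plan is to establish the identity $(\mathrm{L}A)(y) = T_A(-y)$ by a combinatorial bijection between the classes enumerated by the two sides, exactly the one displayed just before the statement of the Definition. First I would unwind the right-hand side: $T_A(y)$ is the generating series of labelled tree graphs where a leaf carries $y$, an internal edge carries $u$, and a $k$-valent internal vertex (with $k \geq 3$) carries $\lambda_k$. The combinatorial content of the Legendre transform definition, $(\mathrm{L}A)(x) = (A \circ A'^{-1})(x) - x\,A'^{-1}(x)$, should be recovered by the class-level bijection
\[
v_1 \partial_{v_1} \mathcal{T}^l \uplus \biguplus_{k\geq 2} v_k \circ (\partial_{v_1}\mathcal{T}^l) \;\sim\; \mathcal{T}^l \uplus \big(e \ast (\mathcal{E}_2 \circ e^{-1}\partial_{v_1}\mathcal{T}^l)\big),
\]
so the core of the proof is to verify that this bijection is well defined, bijective, and weight-preserving, and then to read off the functional equation it imposes on $T_A$, matching it against the defining relation for $\mathrm{L}A$.

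The key steps, in order, are as follows. (1) Translate the bijection into a functional equation: rooting a tree at a $1$-valent vertex corresponds to $y\,\partial_y T_A$; grafting a $k$-valent vertex (for $k \geq 2$, using $\lambda_2 := -u^{-1}$ for the quadratic term of $A$) onto $k$ copies of the leaf-rooted tree class corresponds to the composition $A'\circ(\partial_y T_A)$-type term; and the right-hand side produces $T_A$ itself plus a contribution from attaching an edge to a pair of leaf-rooted trees. (2) Solve the resulting equation to identify $\partial_y T_A$ as the compositional inverse of $A'$ (up to the sign bookkeeping coming from the $-y$ substitution), i.e. show $\partial_y T_A(-y) = -A'^{-1}(y)$ or the analogous signed statement; this is where the hypothesis that $A'^{-1}$ exists is used, and it matches the appearance of $g = A'^{-1}$ in the analytic formula. (3) Integrate, or more precisely assemble the pieces: write $T_A(-y)$ as the sum of the "$A$-evaluated-at-inverse" part and the "$-y \cdot$ inverse" part, and check termwise that this equals $(A\circ A'^{-1})(y) - y\,A'^{-1}(y) = (\mathrm{L}A)(y)$. (4) Confirm the base/low-order cases (the smallest trees, e.g. a single edge between two leaves, and the $\lambda_k$-linear terms) to pin down the constant of integration and the signs.

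I expect the main obstacle to be step (1): making the class-level bijection genuinely rigorous and bookkeeping the weights and signs correctly, in particular the role of the anti-edge $e^{-1}$ and the factor $\mathcal{E}_2$ (the set of size $2$) on the right-hand side, together with the sign flip encoded in the $-u^{-1}$ coefficient of the quadratic term of $A$ and the evaluation at $-y$. These sign conventions are exactly the kind of thing that is "obvious" on paper but easy to get backwards; the cleanest route is probably to check the bijection degree by degree on small trees first, fix all conventions there, and only then state it in full generality. Once the bijection and its weight-compatibility are nailed down, steps (2)--(4) are essentially the manipulation of formal power series identities (separation of variables / reading off coefficients) of the same flavor as the computations in Corollary~\ref{corcorcor}, and should go through routinely. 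An alternative, less conceptual route would be to expand both sides as sums over set partitions / trees directly and match using Fa\`a di Bruno (Lemma~\ref{faa}) and the Bell polynomial identities of Section~\ref{bellpolyidsection}, but the bijective argument is more in the spirit of this paper and of \cite{kjm1,kjm2}, so that is the one I would pursue.
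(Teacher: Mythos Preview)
The paper does not actually prove this theorem: it is quoted verbatim from \cite{kjm2} and used as a black box in Section~\ref{sec legendre}, with no proof supplied here. So there is no ``paper's own proof'' to compare against.

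That said, your plan is the right one and is essentially the argument of \cite{kjm1,kjm2}: the class-level bijection displayed just before the Definition is precisely the combinatorial identity underlying $(\mathrm{L}A)(y)=T_A(-y)$, and your steps (1)--(4) --- passing from the bijection to a functional equation, recognizing $\partial_y T_A$ as $A'^{-1}$ up to sign, and assembling the two pieces of the Legendre transform --- are exactly how that reference proceeds. Your caution about the $e^{-1}$, $\mathcal{E}_2$, $-u^{-1}$, and $-y$ bookkeeping is well placed; those are the only places the argument can go wrong, and checking the first two or three coefficients by hand is indeed the cleanest way to pin them down. The alternative Fa\`a di Bruno route you mention would also work but is less illuminating, as you say.
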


This combinatorial Legendre transform is applicable to all formal power series with vanishing constant and linear terms, and the most important fact is that the new Legendre transform drops the convexity constraint. The following formula for compositional inverses appears in \cite{loday} (compare to the formula in \ref{form}).

\begin{lem}[\cite{loday}] If $a(x)=x+a_1x^2+a_2x^3\cdots$ then the compositional inverse of $a(x)$ is $r(x)=x+r_1x^2+r_2x^3\cdots$ where 
\[r_n=\sum_{m_1,\ldots,m_k}(-1)^{\sum m_j} B(n+1,m_1,\ldots,m_k)\;a_1^{m_1}\cdots a_k^{m_k},\] where $B(m_{-1},m_1,\ldots,m_k)$ is the number of unlabelled rooted trees with $m_j$ vertices with $j+1$ children.
\end{lem}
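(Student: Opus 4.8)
The plan is to prove the formula
\[
r_n=\sum_{m_1,\ldots,m_k}(-1)^{\sum m_j} B(n+1,m_1,\ldots,m_k)\;a_1^{m_1}\cdots a_k^{m_k}
\]
by a Lagrange-inversion argument combined with a direct combinatorial interpretation of the right-hand side as a signed sum over rooted plane (or unlabelled, depending on the normalization hidden in $B$) trees. First I would set up the functional equation. Since $a(x)=x+\sum_{j\geq1}a_jx^{j+1}=x(1+\sum_{j\geq1}a_jx^j)$, writing $a(x)=x/\phi(x)$ with $\phi(x)=1/(1+\sum_{j\geq1}a_jx^j)$, the compositional inverse $r(x)$ satisfies $r(x)=x\,\phi(r(x))$. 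By the Lagrange inversion formula, $[x^{n+1}]r(x)=\tfrac{1}{n+1}[x^{n}]\phi(x)^{n+1}$, so $r_n=[x^{n+1}]r(x)=\tfrac{1}{n+1}[x^n]\bigl(1+\sum_{j\geq1}a_jx^j\bigr)^{-(n+1)}$. The task then reduces to showing this coefficient equals the claimed signed tree sum.

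Next I would expand $\bigl(1+\sum_{j\geq1}a_jx^j\bigr)^{-(n+1)}$ using the generalized binomial series and match terms. Extracting $[x^n]$ produces, for each composition of $n$ into parts $j$ with multiplicity $m_j$, a multinomial-type coefficient times $\prod_j a_j^{m_j}$ and a sign $(-1)^{\sum m_j}$. The remaining purely enumerative identity to verify is that $\tfrac{1}{n+1}$ times this multinomial coefficient equals $B(n+1,m_1,\ldots,m_k)$, the number of unlabelled rooted trees on $n+1$ vertices in which exactly $m_j$ vertices have $j+1$ children (equivalently, $j$ children in the child-count convention, depending on how one reads the statement — I would fix the convention carefully against the $a_0=1$ root normalization). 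This is a standard fact: it is precisely the tree form of the cycle lemma / Lagrange inversion, counting plane trees by their degree sequence via the ballot-type quotient $\tfrac{1}{n+1}\binom{n+1}{m_0,m_1,m_2,\ldots}$ where $m_0=n+1-\sum_{j\geq1}m_j$ is the number of leaves, and one checks that $\sum_j (j)m_j = n$ (edges = internal slots) forces $m_0 = n+1-\sum_{j\ge 1}m_j$ to be the correct leaf count automatically.

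I expect the main obstacle to be purely bookkeeping: reconciling the indexing conventions between $a(x)=x+a_1x^2+\cdots$ (so $a_j$ is attached to $x^{j+1}$, i.e. the root contributes nothing and a vertex recorded by $a_j$ has $j$ subtrees), the binomial-expansion exponents, and Loday's notation $B(m_{-1},m_1,\ldots,m_k)$ with its $m_{-1}=n+1$ slot and "$j+1$ children" phrasing. Once the dictionary vertex recorded by $a_j \leftrightarrow$ vertex with $j$ children is pinned down and the leaf count is shown to be forced, the identity $\tfrac{1}{n+1}\binom{n+1}{m_0,m_1,\ldots}=B(n+1;m_1,\ldots,m_k)$ is exactly the classical enumeration of plane trees by degree sequence, which I would either cite or prove in one line via the cycle lemma. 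Alternatively — and this may be cleaner given the paper's theme — I would observe that this lemma is the unsigned/plane-tree shadow of formula $(\dagger)$ for $b_{n+1}$ already established: since Corollary \ref{corcorcor} shows $G=F^{-1}$, comparing the Bell-polynomial expression $(\dagger)$ with the present tree sum is itself the desired combinatorial identity, and the proof amounts to unwinding $B_{n,k}(-1!a_1,-2!a_2,\ldots)$ into its underlying signed tree model and matching with $B(n+1,m_1,\ldots,m_k)$. I would present whichever route keeps the sign tracking shortest, most likely the Lagrange-inversion route with the cycle lemma invoked for the plane-tree count.
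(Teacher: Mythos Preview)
The paper does not prove this lemma at all: it is simply quoted from \cite{loday} as a known formula for the compositional inverse, and then used in Section~\ref{sec legendre} to identify the Legendre-transform coefficients $L^{(n)}/n!$ with $b_{n-1}/n!$. So there is no ``paper's own proof'' to compare against.

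Your Lagrange-inversion route is correct and is the standard argument. To confirm the bookkeeping you were unsure about: with $m_{-1}=n+1$ interpreted as the number of \emph{leaves} (vertices with $0=(-1)+1$ children) and $m_j$ the number of internal vertices with $j+1$ children, the constraint $\sum_{j\ge 1} j\,m_j=n$ is exactly the edge count, the total number of vertices is $N=n+1+\sum_{j\ge1}m_j$, and the cycle lemma gives
\[
B(n+1,m_1,\ldots,m_k)=\frac{1}{N}\binom{N}{\,n+1,\,m_1,\,m_2,\ldots\,}=\frac{(n+m)!}{(n+1)!\,m_1!\cdots m_k!},\qquad m=\sum_j m_j,
\]
which matches $\tfrac{1}{n+1}\binom{n+m}{m}\binom{m}{m_1,\ldots,m_k}$ coming from your expansion of $(1+\sum_j a_j x^j)^{-(n+1)}$. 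So the identification goes through cleanly once you read ``$m_{-1}$'' as the leaf count rather than the total vertex count.

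One caution about your alternative route: deducing the lemma from $(\dagger)$ and Corollary~\ref{corcorcor} would reverse the logical flow the paper intends. In Section~\ref{sec legendre} the lemma is an \emph{input} used to recognise $b_{n-1}/n!$ inside the Legendre transform, not an output of the $b_n$ analysis. If you want a self-contained proof here, the Lagrange-inversion/cycle-lemma argument is the right choice.
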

Let $A(x)=-\frac{x^2}{2!}+\sum_{k\geq3} \lambda_k \frac{x^k}{k!}$  and $y=-\partial A/\partial x$. Then it can be shown that $\mathrm{L}A(y)=y^2/2+\sum_{n=3}L_n \frac{y^n}{n!}$ with $L_n=(n-1)!r_{n-2}$, where $r(x)$ is the inverse of $y(x)$ (see \cite{kjm1}). Now, given a diffeomorphism $F(\phi)=\overset{\infty}{\underset{j=0}\sum} a_j\phi^{j+1}$ (with $a_0=1$), then $F=-\displaystyle\frac{\partial A}{\partial \phi}$, where 
\begin{equation}\label{eq A}
A=-(\displaystyle\frac{\phi^2}{2}+\displaystyle\frac{a_1}{3}\phi^3+\displaystyle\frac{a_2}{4}\phi^4+\cdots).
\end{equation}
With our previous information, the inverse of $F$ is $\sum_{n=1} \frac{b_n}{n!}$, thus, the $n$th coefficient in $\mathrm{L}A(y)$ is 
\[\displaystyle\frac{L^{(n)}}{n!}=\displaystyle\frac{(n-1)!}{n!}\cdot\displaystyle\frac{b_{n-1}}{(n-1)!}=\frac{b_{n-1}}{n!}.\]

Let us now compare the Legendre transform context with the main situation of the present paper.  For us, the series with coefficients $b_{n-1}/n!$ is the series for tree diagrams with only one external edge off-shell in the theory transformed by the diffeomorphism $F$.   The Lagrangian for this theory $L_F(\phi)$ is as given at the beginning of Section~\ref{sec field theory set up}.  This is a much more complicated Lagrangian than the $A$ of \eqref{eq A}.  However, the Legendre transform of $A$ gives the series with coefficients $b_{n-1}/n!$, as we saw above using only the combinatorial Legendre transform results.  But, the Legendre transform of $L_F$ must give the series of tree amplitudes of the theory with Lagrangian $L_F$, and this is essentially, by the key results we have been discussing here, also the series with coefficients $b_{n-1}/n!$.  

So we have two apparently very different theories giving the same tree series.  This feels somewhat like the initial set up where the free theory and the transformed theory looked very different, but ultimately did, as one would naively expect, describe the same theory.  The situation here, however, is a bit different: $A$ really is a much simpler Lagrangian which cannot describe the same theory as it is purely a graph-counting-type Lagrangian, with no physical parameters at all.  The solution to this discrepancy is hidden in the `essentially' at the end of the previous paragraph.  Namely, the series with coefficients $b_{n-1}/n!$ is not the series of tree amplitudes of the theory with Lagrangian $L_F$, but the series of the tree amplitudes \emph{with all but one external edge put on-shell}.  The combinatorial Legendre transform can not see the on-shell condition -- momentum conservation at the vertices is built in, but nothing further regarding the physical nature of the momenta.  

Thus we see that the simplifications brought by putting all by one external edge on-shell bring us to a tree level theory that can also be obtained as the tree level outcome of the Lagrangian $A$ without any kinematical conditions.  By comparison, putting all of the external edges on-shell simplifies even further, giving $0$ for all higher point functions, that is, recovering the free theory, the proof of which was the original goal of \cite{karendiffeo}.

\section{Discussion}

We will conclude with a few comments.

First note that as discussed in \cite{karendiffeo} the loop level results follow from these tree level results by reinterpreting the $b_n$ as describing trees with at most one off-shell external edge and then gluing.  For this to work the renormalization scheme must be a kinematical scheme, such as a subtraction scheme.  The key property needed of the renormalization scheme is that all tadpoles vanish.
We continue to find this interesting evidence that the kinematical schemes are physically better in some sense, and we continue to wonder if the loop-level results have a more fully combinatorial derivation as well.

There remains some subtleties in how exactly the renormalization process interacts with the diffeomorphism.  At the time of this work, we did not have a theorem to say that the framework given here, along the with the vanishing of tadpoles should suffice to make it possible for the renormalization process interact well with the diffeomorphism.  As mathematicians, we were reluctant to speculate much beyond the theorems we could prove.  Subsequently, however, this has been investigated by Paul Balduf, and now there are some results, see in particular Theorem 4.1 of \cite{Paulrenorm}.

The generating functions proof highlights the importance of the equations \eqref{myequation} and \eqref{myequation2} as they are the differential equations analogous to the two defining recurrences for the $b_n$s.  Many other differential equations would also have $G=F^{-1}$ as a solution; is there some structural feature that distinguishes \eqref{myequation} and \eqref{myequation2}, or conversely could other equations with this solution have similar physical meanings when converted back into recurrences?

One might also approach the proof using an infinitesimal transformation $\phi \mapsto \phi + \epsilon F(\phi) + O(\epsilon^2)$ and work modulo $\epsilon^2$.\footnote{Thanks to the referee for this suggestion}  For instance this is the approach of \cite{johnsonfreyd2010coordinate}.  In our situation this would mean that massive and kinematic vertices of all degrees would still be generated, but the Feynman rules for them would be much simpler $i(n-1)!a_{n-2}(p_1^2+p_2^2+\cdots+p_n^2)$ for the kinematic vertex of degree $n$ and $-im^2n!a_{n-2}$ for the massive vertex of degree $n$.  Likewise in the Bell polynomial expression for $b_{n+1}$, each partial Bell polynomial would only contribute one term modulo $\epsilon^2$ giving $b_{n+1} = \sum_{k=1}^n\frac{(n+k)!}{(k-1)!}(-1)^ka_{n-k+1}\epsilon$ modulo $\epsilon^2$ and \eqref{eq orig b rec} would also have only the extreme terms in the inner sum, much simplifying the proof that \eqref{eq orig b rec} implies the previous expression.  

However, for our approach of working purely diagrammatically with no path integral assumptions whatsoever, we cannot take this infinitesimal simplification.  The freedom to work modulo $\epsilon^2$ is fundamentally a path integral assumption -- a very mild one that surely any reasonable path integral would satisfy, but none the less a path integral assumption.  By forgoing the path integral entirely, we also forgo this simplification.  The resulting extra complexity is not without some benefits, we see more clearly how the functional inverse comes in and we get to prove some nice Bell polynomial identities along the way as well as maintaining a complete independence from any path integral assumptions.

\medskip

Two obvious directions for future inquiry are field diffeomorphisms of theories beyond scalar theories as well as more general transformations going beyond these point transformations of the field.  For the former we do not expect much difficulty, though the details are bound to be interesting.  The latter seems more difficult, at least it is not evident to us what other good combinatorial questions can be asked in that direction, so it becomes a very different question.

From the present results one knows that all scalar theories related by a field diffeomorphism are equivalent, so one has equivalence classes of theories.  Any of the generalizations above would also lead to equivalence classes of theories.  Some of the initial motivation of \cite{kreimer2} was the idea that perhaps some of the more vexing field theories might have simpler theories in the same equivalence class, while in \cite{karendiffeo} there are some thoughts regarding equivalence classes and Haag's theorem.  Ultimately, equivalence classes under only field diffeomorphism are too small for most purposes.  For instance one doesn't get the whole Borchers class of even the free field \cite{EpsteinH1963Otbc}.  However, while the part of the Borchers class obtained by field diffeomorphisms is small, it is up to scaling the invertible part, in the sense that differomorphisms are invertible and any linear combination of powers of $\phi$ which does not involve $\phi$ itself cannot be transformed by a formal series to re-obtain $\phi$ and $\phi$ cannot be re-obtained from its derivative using only further derivatives and powers.  One could investigate an analogous combinatorial analysis of field transformation in the context of Ore algebras rather than algebras of formal power series in order to to incorporate the derivatives.

The connection with the Legendre transform also brings up the question of when some restrictions in one theory, such having all but one external edge on-shell, lead to something equivalent to a simpler theory without such restrictions, and what is the physical meaning of this.  One way to understand the physical meaning of why the tree series of $L_F$ with all but one external edge on-shell corresponds to the tree series of $A$ is given by Paul Balduf in his physical proof of the appearance of the inverse of $F$, see Lemma 2.22 of \cite{Paul}, which is done by moving to position space and using the vanishing of tadpoles.

\bibliographystyle{plain}
\bibliography{references.bib}

\end{document}